\documentclass[conference,10pt]{IEEEtran}
\usepackage{amsmath,amssymb,amsthm,graphicx,mathtools,enumerate,subcaption}
\usepackage[hidelinks]{hyperref}
\usepackage[capitalize,nameinlink,noabbrev]{cleveref}
\usepackage[margin=.5in]{geometry}

%\linespread{0.92}
%\addtolength{\abovedisplayskip}{-1mm}
%\addtolength{\belowdisplayskip}{-1mm}

%\usepackage[color=yellow!70!orange!20, linecolor=yellow!70!orange, textsize=scriptsize, textwidth=1.55in]{todonotes}

\newtheorem{lemma}{Lemma}
\newtheorem{theorem}{Theorem}

\theoremstyle{definition}

\newtheorem{defn}{Definition}[section]

\setlength{\fboxsep}{8pt}
\newcommand{\framed}[1]{\bigskip\noindent\fbox{\begin{minipage}{\linewidth-\fboxsep-\fboxsep}#1\end{minipage}}\bigskip}

\makeatletter
\newcommand{\clabel}[2]{%
  \protected@write \@auxout {}{\string \newlabel {#1}{{#2}{\thepage}{#2}{#1}{}}}%
  \raisebox{1em+\fboxsep}[0pt]{\hypertarget{#1}{}}}
\makeatother

\newcommand{\ineqref}[1]{\hyperref[#1]{\eqref{#1}}}

\title{Efficient Unbiased Sparsification}
\author{
\IEEEauthorblockN{Leighton Barnes\IEEEauthorrefmark{1}\IEEEauthorrefmark{2}, Stephen Cameron\IEEEauthorrefmark{1}, Timothy Chow\IEEEauthorrefmark{1}, Emma Cohen\IEEEauthorrefmark{1}, Keith Frankston\IEEEauthorrefmark{1}, \\
Benjamin Howard\IEEEauthorrefmark{1}, Fred Kochman\IEEEauthorrefmark{1}, Daniel Scheinerman\IEEEauthorrefmark{1}, and Jeffrey VanderKam\IEEEauthorrefmark{1}}
\vspace{.1cm}
\IEEEauthorblockA{\IEEEauthorrefmark{1} Center for Communications Research, Princeton, NJ 08540}
\IEEEauthorblockA{\IEEEauthorrefmark{2} corresponding author: l.barnes@idaccr.org}
}
\date{December 2023}

\hyphenation{spar-si-fi-ca-tion}

\newcommand{\R}{\mathbb{R}}
\newcommand{\E}{\mathbf{E}}

\newcommand{\cdist}{\mathcal{C}}

\DeclareMathOperator{\Div}{Div}
\DeclareMathOperator{\supp}{I}

\newcommand{\DKL}{D_{\mathrm{KL}}}

\newcommand{\mVert}{\mathrel{\Vert}}
\newcommand{\defeq}{\vcentcolon=}
\newcommand{\pinv}{permuta\-tion-invariant}
\newcommand{\ones}{\vec{1}}
\newcommand{\pd}[2][]{\tfrac{\partial #1}{\partial #2}}

\DeclarePairedDelimiter{\abs}{\lvert}{\rvert}
\DeclarePairedDelimiter{\norm}{\lVert}{\rVert}
\DeclarePairedDelimiter{\set}{\{}{\}}

\begin{document}

\maketitle

\begin{abstract}
 
An unbiased \emph{$m$-sparsification} of a vector $p\in \R^n$ is a random vector $Q\in \R^n$ with mean~$p$ that has at most $m<n$ nonzero coordinates. Unbiased sparsification compresses the original vector without introducing bias; it arises in various contexts, such as in federated learning and sampling sparse probability distributions. Ideally, unbiased sparsification should also minimize the expected value of a divergence function $\Div(Q,p)$ that measures how far away $Q$ is from the original $p$. If $Q$ is optimal in this sense, then we call it \emph{efficient}. Our main results describe efficient unbiased sparsifications for divergences that are either permutation-invariant or additively separable. Surprisingly, the characterization for permutation-invariant divergences is robust to the choice of divergence function, in the sense that our class of optimal $Q$ for squared Euclidean distance coincides with our class of optimal~$Q$ for Kullback--Leibler divergence, or indeed any of a wide variety of divergences.
\end{abstract}

\begin{IEEEkeywords}
federated learning, sparsification, unbiased estimator, optimization
\end{IEEEkeywords}

\section{ Introduction}\label{sec:intro}

Suppose we have a vector $p\in\R^n$,
and (possibly because of memory or bandwidth limitations)
we want to approximate it with a vector $Q\in\R^n$
with at most $m$ nonzero entries, where $m < n$.
The construction of~$Q$ is allowed to be randomized,
and we want $Q$ to be the ``best possible approximation'' of~$p$.
``Best possible approximation'' will be defined precisely later,
but at minimum, we want the expected value of~$Q$ to equal~$p$.
Depending on our application, we may also desire the stronger property
that the sum of the entries of~$Q$ should always equal the sum of the entries of~$p$.
How should we construct~$Q$?

We call this task \emph{efficient unbiased
sparsification} (EUS)---sparsification, because the number of
nonzero entries is reduced from $n$ to~$m$;
unbiased, because the expected value of~$Q$ equals~$p$;
and efficient, in the statistical sense of diverging from~$p$
as little as possible.
The problem of efficient unbiased sparsification,
or something very close to it,
arises in several different contexts.

\begin{enumerate}
\item
In the context of \emph{sampling sparse probability distributions}, we have a discrete probability distribution $p$ on $n$ outcomes, and we seek to randomly construct a sparse probability distribution $Q$ that has at most $m$ nonzero probabilities. We would like this construction to be unbiased, in the sense that the average over the potential sparse probability distributions should be the original distribution $p$, and for it to minimize the expected statistical divergence between $Q$ and $p$. One natural choice of a divergence in this case would be the standard Kullback--Leibler (KL) divergence.
\item 
In \emph{distributed statistical estimation}~\cite{dist_est1,dist_est2,dist_est3,dist_est4}, statistical samples are distributed across a number of client nodes that must send bandwidth-limited messages to a central server. The central server then performs statistical analyses using the compressed versions of the samples. By sparsifying potentially high-dimensional samples, they can be communicated more efficiently to the central server. The constraint that the sparsification be unbiased preserves statistical properties such as the mean, while the efficiency of the sparsification ensures that the compressed versions of the samples are not too far from the original ones.

\item 
In the related field of \emph{federated learning}~\cite{fed_orig,fed_survey}, client nodes are trying to jointly train a machine learning model. In order to facilitate the distributed training, minibatch gradients or model updates need to be communicated between nodes so that they can be aggregated into a combined model. For large machine learning models, however, a single gradient vector can be billions of parameters long, and sparsification strategies can be deployed in order to reduce the associated communication cost \cite{rtopk}. Notably, the works 
\cite{konecny-richtarik, wang-et-al, wangni-wang-liu-zhang} describe optimization objectives and their associated optimal strategies that are similar to a special case of the present work. They consider a ``soft'' sparsification constraint, where the \emph{expected} number of nonzero components can be at most $m$, instead of the ``hard'' constraint that we use, and they consider only the squared Euclidean distance as their divergence.

Many works in this area use some form of sparsification in order reduce communication cost, and we refer the curious reader to the review paper \cite{fed_survey2} that gives an overview of some of these strategies and related issues such as quantization. Techniques such as ``top-$k$'' and ``rand-$k$'' that take the top components by magnitude or just randomly sample components have been shown to be effective. Furthermore, the works \cite{rtopk,horvath-richtarik} both show that a combination of these two strategies can outperform each one on its own separately.  The optimal algorithms that arise from our analysis in the present work have some elements of each of these strategies -- they keep components with sufficiently large magnitude and randomly subsample the others in a particular way. In this way, our work provides a principled reason that this combination is effective, and demonstrates concrete ways in which it can be optimal.

 In a slightly different optimization problem that occurs in federated learning, $n$ refers to the number of client nodes instead of the dimension of the updates, and 
bandwidth limitations force us to restrict the number~$m$
of clients allowed to communicate in each round.
Some clients have more important updates,
so the question arises of how to pick clients in a way that
respects their importance, while minimizing the
statistical distortion that restriction inevitably causes. This problem also leads to a similar optimization problem and is considered in \cite{chen-horvath-richtarik}.

%\item
%\emph{In low-rank matrix approximation,} we suppose $P$ is an $n \times n$ matrix and we want to approximate it by a random rank-$m$ matrix $Q$. If we want $\E[Q] = P$ and to minimize the Frobenius norm $\E\|P-Q\|_F^2$, then by using an SVD (and the unitary invariance of the Frobenius norm) this is equivalent to finding an efficient unbiased $m$-sparsification of the vector of singular values using the squared Euclidean distance. The low-rank matrix $Q$ is then the sum of the sparsified singular values times their corresponding rank-one matrix components from the SVD.

\item 
In \emph{sampling with specified marginals}~\cite{tille},
the goal is to randomly choose a subset of $m$ items from
a population of $n>m$ items, in such a way that the
\emph{inclusion probability} of
item~$i$ ($1\le i\le n$)---i.e., the probability
that item~$i$ belongs to the chosen $m$-element subset---is
proportional to some specified positive number~$p_i$.
It turns out that for some sets of numbers~$p_i$,
it is impossible to achieve this goal exactly,
but we would still like to come as close as possible.
\end{enumerate}

In this paper, we solve the EUS problem by setting up the associated optimization problems, explicitly giving algorithms that produce optimal random vectors $Q$, and, in some cases, by describing the distributions of all random vectors $Q$ that optimize the objectives. We consider both permutation-invariant and additively separable divergences, which we will define shortly. Surprisingly, the characterization for permutation-invariant divergences is robust to the choice of divergence function, in the sense that
our class of optimal~$Q$ for (say) squared Euclidean distance
coincides with
our class of optimal~$Q$ for (say) Kullback--Leibler divergence,
or indeed any of a wide variety of divergences.

The space of unbiased $m$-sparsifications of a given~$p\in\R^n$
may be thought of as an infinite-dimensional
``simplex''~\cite[Section III.8]{barvinok},
which is a convex space in the sense that any mixture of
two unbiased $m$-sparsifications is an unbiased $m$-sparsification.
Expectation is linear, so we are minimizing a linear function
over a convex space.  That might sound promising,
but infinite-dimensional objects are not so easy to work with.
Indeed, it is not even obvious that any EUS exists. We proceed by reducing to a finite-dimensional, but not necessarily convex, problem. We then develop novel techniques for proving that our algorithms describe global minima for the finite-dimensional reductions.

\subsection{ Sampling with Specified Marginals}\label{sec:marginals}

In order to describe our algorithms, we will first need to define the notion of a \emph{heavy} index, and to this end
we will look more closely at the problem of 
sampling with specified marginals.
Assume first that
$\sum_i p_i = m$, and that $p_i \le 1$ for all~$i$.
Then there are many methods of randomly
sampling an \hbox{$m$-element} subset~$T$ of $\{1,2,\ldots,n\}$
in such a way that the inclusion probability of item~$i$ is
equal to (and not just proportional to)~$p_i$.
One method
is to partition a line segment of length~$m$
into $n$ subintervals such that the length of
subinterval~$i$ is~$p_i$, and then choose
$x\in[0,1)$ uniformly at random, and finally let
$i\in T$ if and only if $x+j$ lies in subinterval~$i$
for some integer~$j$.
We leave it to the reader to check that this method does indeed work.
Till\'e~\cite{tille} gives many other methods,
including one that maximizes entropy.

Now, let us re-examine the assumptions that
$\sum_i p_i = m$ and $p_i \le 1$ for all~$i$.
Given a probability distribution on $m$-element subsets,
let $s_i$ denote the inclusion probability of item~$i$.
Recall that in our original problem the marginal inclusion probabilities $s_i$ were only required to be
\emph{proportional} to~$p_i$, and not necessarily \emph{equal} to~$p_i$.
By linearity of expectation, the sum of the $s_i$
equals the expected total number of elements chosen---which
in this case is precisely~$m$.
It follows that for there to exist a probability distribution
on $m$-element subsets that achieves inclusion probabilities
proportional to the given numbers~$p_i$, a necessary condition
is that if the $p_i$ are rescaled so that they sum to~$m$,
then each rescaled $p_i$ must be \emph{equal} to~$s_i$,
and in particular must be at most~$1$.
In other words, if for any~$i$,
\begin{equation}
\label{eq:toobig}
p_i > \frac{1}{m} \sum_{j=1}^n p_j,
\end{equation}
then it is impossible to sample with the specified marginals.
What is one supposed to do in this case?

Of course, one option is to simply issue an error message
and give up.  However, Till\'e~\cite[Section 2.10]{tille}
offers a different approach.
If the largest $p_i$ is ``too big''---meaning that
it satisfies \ineqref{eq:toobig}---then
item~$i$ is automatically granted membership
in our chosen set of $m$ items.
We are thus reduced to choosing $m-1$ items from the
remaining set of $n-1$ candidates.  Again, if the largest
remaining $p_i$ is ``too big'' then it is automatically
included.  This process is iterated until we reach a set of~$p_i$
for which sampling with the specified marginals becomes possible.
Motivated by Till\'e's procedure, we make the following definition.

\begin{defn}\label{def:heavy}

Given a sequence of positive real numbers arranged (without loss of generality) in
weakly decreasing order $p_1 \ge p_2 \ge \cdots \ge p_n$,
and a positive integer $m<n$,
we say that index $i$ is \emph{$m$-heavy} if
\begin{equation}
\label{eq:mheavy}
\sum_{j=i+1}^n p_j \le (m-i)p_i.
\end{equation}
If $i$ is not $m$-heavy then we say it is \emph{$m$-light}.
\end{defn}

Adding $p_i$ to both sides of \ineqref{eq:mheavy}
shows that if $i$ is $m$-heavy, then
\begin{equation*}
\sum_{j=i}^n p_j \le (m-i)p_i + p_i = (m - (i-1))p_i \le (m-(i-1))p_{i-1};
\end{equation*}
i.e., $i-1$ is also $m$-heavy.
So there is some threshold $h$ up to which all the $i$ are $m$-heavy
and beyond which all the $i$ are $m$-light. In practice, the fastest way to locate this threshold is probably by binary search.

\subsection{ Efficient Unbiased Sparsification}

In order to state our main results, we must first give more precise definitions of unbiased sparsification and divergences.

\begin{defn}\label{def:sparsification}

Write $\supp(v) = \set{i \mid v_i \neq 0}$ for
the set of indices of nonzero coordinates of $v\in \R^n$, which we sometimes refer to as the \emph{survivor set}
%\footnote{
%Sometimes $\supp(v)$ is called the \emph{support} of~$v$,
%but we avoid this terminology, in order to avoid confusion
%with the ``support'' of a (discrete) random variable, which
%refers to the set of values that it takes with nonzero probability.
%As we shall see,
%the latter notion of support plays an important role in the
%proofs of our main results.}
of $v$.
If $p = (p_1, \ldots, p_n) \in \R^n$ and $m$ is a
positive integer such that $\abs{\supp(p)} > m$,
then a random vector $Q = (Q_1, \ldots, Q_n) \in \R^n$ is
an \emph{unbiased (random) \hbox{$m$-sparsification}} of~$p$ if
\begin{enumerate}[a)]
\item $\abs{\supp(Q)} \le m$ (i.e., $Q$ is $m$-sparse) and
\item $\E[Q] = p$ (i.e., $Q$ is an unbiased estimate of $p$).
\end{enumerate}
\end{defn}

%In some cases we may want $Q$ to satisfy additional properties. For instance, if $p$ lies in the probability simplex ($p_i > 0$ and $\sum_i p_i = 1$) then it is natural to require the same of $Q$. In this paper we consider only two cases: either $p$ and $Q$ are both constrained to lie in the simplex or they are both unconstrained in $\R^n$ (except for the above sparsity and unbiasedness constraints).

When we say that an
unbiased % (strong/sign-preserving)
\hbox{$m$-sparsification} $Q$ of~$p$ is \emph{efficient},
we mean that it minimizes $\E[\Div(Q,p)]$ among all unbiased % (strong/sign-preserving)
$m$-sparsifications of $p$, where $\Div$
is a given \emph{divergence} function. 
But what exactly is a divergence function?
There are many different notions of divergence
in the literature~\cite{amari};
while we are not able to handle every such notion, our results cover two wide classes of functions.
%The next definition is slightly technical,
%because it specifies the conditions on the
%divergence that we use in our proofs.

\begin{defn}
 Let $X$ be a convex subset of~$\R^n$ and let
$\Div:X \times X \to \R$ be a function.
For fixed $p$, we write $D$ for the function $D(q) \defeq \Div(q,p)$.
\begin{enumerate}
\item $\Div$ is \emph{convex} if for every fixed $p$, $D$ is a convex function of $q$. We say $\Div$ is \emph{strictly convex} if
$D$ is twice differentiable\footnote{
``Twice differentiable'' means that the second-order Fr\'echet
derivative exists everywhere~\cite[Chapter VIII, Section~12]{dieudonne-analysis}.
In the literature, $F$ is not always assumed to be twice differentiable,
but then strange things can occur~\cite{dudley} that we prefer to ignore
in this paper. We should also emphasize that when we say that $\Div$
is ``strictly convex'', we require only that
$D$ is strictly convex for each fixed~$p$, and not that $\Div$ is
a strictly convex function jointly in $q$ and~$p$.}
and its Hessian matrix is positive definite everywhere.
\item $\Div$ is \emph{additively separable}
if for every fixed~$p$ there are
functions $f_1,\ldots,f_n$ (possibly depending on $p$) such that
\begin{equation}
\label{eq:additivelyseparable}
D(q) = \sum_{i=1}^n f_i(q_i).
\end{equation}
\item $\Div$ is \emph{\pinv}
if for every fixed $p$ there is a function $F\colon X \to \R$
and $\alpha \in \R^n$ (both possibly depending on $p$) such that
\begin{equation}
\label{eq:pinv}
D(q) = F(q) + \alpha \cdot q
\end{equation}
and $F$ is \pinv\ in~$q$; i.e., $F(q) = F(\sigma(q))$
where $\sigma(q) \defeq (q_{\sigma(1)}, \ldots, q_{\sigma(n)})$
is any vector obtained from~$q$ via a permutation~$\sigma$
of the coordinates.
\end{enumerate}
\end{defn}

Examples of divergences that are strictly convex,
additively separable, and \pinv\
include squared Euclidean distance
\begin{equation*}
\Div(q,p) = \norm{q-p}^2 = \sum_i (q_i - p_i)^2,
\end{equation*}
and the Kullback--Leibler divergence
\begin{equation*}
\Div(q,p) = \DKL(q \mVert p) = \sum_i q_i \log(q_i/p_i).
\end{equation*}
More generally, a \emph{Bregman divergence}~\cite{amari}
by definition has the form
\begin{equation}
\label{eq:bregman}
\Div(q,p) = G(q) - G(p) - (\nabla G(p))\cdot (q-p)
\end{equation}
for some strictly convex function~$G$ called the
\emph{Bregman generator}.  If $G$ is \pinv\ in $q$ (respectively, additively separable), then the Bregman divergence is \pinv\ (respectively, additively separable). This can be seen by setting $F(q) = G(q) - G(p) +  (\nabla G(p))\cdot p$ and $\alpha =  -\nabla G(p)$.

Similarly, any \emph{$f$-divergence}
\[\Div(q,p) = \sum_i q_i f(p_i/q_i)\]
(for convex $f$) is additively separable, but $f$-divergences are not typically \pinv. Note that our definition of \pinv\ may be somewhat counterintuitive; for instance, the (un-squared) Euclidean distance $\Div(q, p) = \norm{q-p}$ is \emph{not} \pinv\ by our definition (nor is it additively separable).

In order to ensure that $\E[\Div(Q,p)]$ makes sense for sparse $Q$, we require that $\Div$ be defined (and finite) on the (closed) orthant containing $p$. Our results still apply in many cases when the derivatives of the divergence are infinite on the boundary (as in the case of Kullback--Leibler divergence).

Our first main result, given in \autoref{sec:pinv}, is that for convex, \pinv\ divergences, efficient $m$-sparsifications of $p$ are given by the following simple algorithm. Note that in the following we assume $p_i > 0$. This is a natural assumption if $p$ is a probability distribution, but may not make as much sense when $p$ represents a gradient vector in the federated learning example. In this case, the method can still easily be applied by switching the sign of the negative $p_i$ components, and then after sparsification, switching the corresponding signs of $q_i$. This can be done without loss of generality provided that $\Div(q,p)$, with the signs of both $p_i$ and $q_i$ switched, is still a divergence that satisfies the required axioms. This is trivially the case with squared Euclidean distance.

\framed{\clabel{alg:usa}{US-PI}%
 \textbf{Algorithm.} \emph{Unbiased Sparsification for Permutation-Invariant Divergences (\ref*{alg:usa}).}\\
 Without loss of generality, reorder the coordinates
of $p\in\R_{>0}^n$
so that $p_1 \ge \cdots \ge p_n > 0$.
Let $H = \set{1, 2, \ldots, h}$ be the $m$-heavy indices and let
\begin{equation*}
l := \frac{1}{m-h} \sum_{j=h+1}^n p_j.
\end{equation*}
Sample $m-h$ of the indices $\{h+1,h+2,\ldots,n\}$
with specified marginals proportional to $p_i$ ($h+1\le i\le n$),
and call the sample set~$I$.
Return the vector $Q\in\R^n$ whose coordinates are given by
\begin{equation*}
Q_i = \begin{cases}
  p_i, & \text{if $i\le h$;} \\
  l,   & \text{if $i\in I$;} \\
  0,   & \text{otherwise.}
\end{cases}
\end{equation*}
}

\autoref{fig:usa} illustrates the possible $2$-sparsifications of $p\in \R^3_{>0}$ that may result from \autoref{alg:usa}. In (a), $p$ has no heavy indices and so \autoref{alg:usa} yields three possible values for the sparsifications of $p$: $(1/2,1/2,0)$, $(1/2,0,1/2)$, and $(0,1/2,1/2)$.  In (b), the first index of $p$ is heavy, and so the algorithm yields only two possible values for the sparsifications of $p$: $(p_1,1-p_1,0)$ and $(p_1,0,1-p_1)$.

This characterization has two remarkable features: first, the efficient sparsifications are \emph{independent} of the divergence that is being optimized, beyond its convexity and permutation-invariance; second, we find that the random variable $Q$ satisfies $\sum_i Q_i = \sum_i p_i$. Therefore if $p$ is a probability distribution, so is $Q$.

Our second main result, given in \autoref{sec:separable}, is a similar characterization of the efficient sparsifications of $p\in \R^n$ in the case where the divergence is strictly convex and additively separable (but not necessarily \pinv). In this case, the efficient sparsifications \emph{do} depend on the choice of divergence, since they are not constrained to having the same divergence function in each coordinate, and they do not typically satisfy $\sum_i Q_i = \sum_i p_i$. In making the strict convexity assumption instead of normal convexity, we are able to characterize all possible efficient unbiased sparsifications. This is also possible in \autoref{sec:pinv}, and the details can be found in Appendix \ref{sec:uniqueness}. Without strictness in \autoref{sec:separable}, it is also possible to prove efficiency without getting a complete characterization of all efficient sparsifications.

\section{ Permutation-Invariant Divergences}\label{sec:pinv}

Our first step is to use convexity to reduce the problem to \emph{concentrated} distributions, meaning they are supported on a finite (and bounded) number of values. This allows us to give a finite-dimensional parametrization of the problem.

Unfortunately, this parametrization is no longer convex as stated. We nevertheless show that critical points of the Lagrangian must correspond to efficient sparsifications.

Finally, we show that if $\Div$ is \pinv\ then the solutions corresponding to the random variable given by \ref{alg:usa} are indeed critical points, yielding our desired result.

Throughout this section we assume that $Q_i\geq 0$, i.e., that we only consider sparsifications that take nonnegative values (or more generally, $Q$ only takes values in the same orthant as $p$). This is done for technical reasons in order to facilitate the proof of Theorem \ref{thm:pinv}, but it may be possible to relax this assumption. Because of this, in Theorem \ref{thm:pinv} we only show that \autoref{alg:usa} produces sparsifications that are efficient among all \emph{nonnegative} sparsifications.

\subsection{ Facet Concentration}

Here we fix $p\in \R_{>0}^n$ and assume that $\Div(q, p) = D(q)$ is convex.
%We also assume (without loss of generality) that $\sum_i p_i = 1$, and we are interested in unbiased sparsifications $Q$ of $p$ which also take values in the simplex $\Delta = \set{x\in \R^n \mid x_i \geq 0,\ \sum_i x_i = 1}$.
Throughout, the symbols $I$ and $J$ will represent subsets of $\{1,2,\ldots,n\}$ of cardinality $m$. For each $I$, let
\begin{align*}
  \Delta^I &= \set{x \in \R_{\geq 0}^n \mid \supp(x) = I} \\
  & = \set{x\in \R_{\geq 0}^n \mid x_i > 0 \text{ for } i\in I,\ x_i = 0 \text{ for } i\notin I}
\end{align*}
be the (open) facet consisting of the points whose nonzero coordinates are those with indices in $I$. We note that the $\Delta^I$ are pairwise-disjoint convex bodies. An unbiased sparsification\footnote{Note that here our $Q$ is implicitly constrained to have \emph{exactly} $m$ nonzero coordinates, whereas earlier we allowed \emph{at most} $m$ nonzero coordinates. This assumption serves to prevent the argument from becoming needlessly complicated. It can be shown that it is never optimal to use fewer than $m$ nonzero coordinates.} is a random variable $Q$ taking values in $\cup_I \Delta^I$ such that $\E[Q] = p$.

%\begin{remark}
%  

%\end{remark}
%\begin{remark}
%  Many of the lemmas in this section have natural analogous statements (with identical proofs) for the case where we do not constrain $Q\in \Delta$ to have positive coordinates summing to 1. Since we will eventually require $Q$ to be positive, we carry this assumption along from the start. On the other hand, we never actually require the hypothesis that $\sum_i Q_i = 1$; a totally analogous argument shows that the same $Q$ which are efficient over the simplex $\Delta$ continue to be efficient if we drop the constraint on the sum of the coordinates of $Q$.
%\end{remark}

\begin{lemma}[Facet concentration]\label{lem:face-concentration}
  Assume that $\Div$ is convex, and let $Q$ be an $m$-sparsification of $p\in \R_{>0}^n$. For each $I$ with $\Pr(Q\in \Delta^I) > 0$, write $q^I \defeq \E[Q | Q \in \Delta^I]$. Then the sparsification $Q'$ such that $\Pr(Q'\in \Delta^I) = \Pr(Q\in\Delta^I)$ and $\Pr(Q' = q^I | Q'\in \Delta^I) = 1$ satisfies
  $\E[\Div(Q',p)] \leq \E[\Div(Q,p)].$ If $\Div$ is strictly convex and $Q,Q'$ do not have identical probability measures, then this inequality is strict.
  
 % is also an efficient $m$-sparsification. That is, any EUS $Q$ is concentrated on at most one point in each $\Delta^I$.
\end{lemma}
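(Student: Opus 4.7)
The plan is a direct application of Jensen's inequality applied facet by facet, together with the tower property of conditional expectation.

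First I would check that $Q'$ is a well-defined unbiased $m$-sparsification. Each $q^I$ lies in $\Delta^I$ (its coordinates outside $I$ are zero since $Q$'s are a.s.\ zero there, and its coordinates in $I$ are strictly positive as averages of a.s.\ positive values), so $Q'$ is $m$-sparse. Unbiasedness follows from the tower property:
\begin{equation*}
\E[Q'] = \sum_{I}\Pr(Q'\in\Delta^I)\,q^I = \sum_I \Pr(Q\in\Delta^I)\,\E[Q\mid Q\in\Delta^I] = \E[Q] = p.
\end{equation*}

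Next, the main step: for each $I$ with $\Pr(Q\in\Delta^I)>0$, convexity of $D$ and Jensen's inequality give
\begin{equation*}
\E[D(Q)\mid Q\in\Delta^I] \;\geq\; D\bigl(\E[Q\mid Q\in\Delta^I]\bigr) \;=\; D(q^I).
\end{equation*}
Since $Q'$ equals $q^I$ almost surely on $\{Q'\in\Delta^I\}$, the right-hand side equals $\E[D(Q')\mid Q'\in\Delta^I]$. Multiplying by $\Pr(Q\in\Delta^I)=\Pr(Q'\in\Delta^I)$ and summing over $I$ (using the law of total expectation, and noting that both $Q$ and $Q'$ live in $\bigcup_I \Delta^I$ almost surely) yields
\begin{equation*}
\E[\Div(Q,p)] = \sum_I \Pr(Q\in\Delta^I)\,\E[D(Q)\mid Q\in\Delta^I] \;\geq\; \sum_I \Pr(Q'\in\Delta^I)\,D(q^I) = \E[\Div(Q',p)].
\end{equation*}

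For the strict inequality, suppose $\Div$ is strictly convex. Strict Jensen's inequality says that $\E[D(Q)\mid Q\in\Delta^I] = D(q^I)$ holds only when $Q$, conditioned on lying in $\Delta^I$, is almost surely equal to the constant $q^I$. If this happens for every $I$ with $\Pr(Q\in\Delta^I)>0$, then $Q$ and $Q'$ have identical conditional distributions on every facet and identical facet-occupancy probabilities, hence identical laws. Thus if the laws differ, Jensen is strict on at least one facet of positive probability, making the summed inequality strict as well.

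The only mild subtlety is the boundary behavior: one must confirm that $q^I\in\Delta^I$ (not merely in its closure) so that $D$ and strict convexity apply at $q^I$, and that $D$ restricted to each $\Delta^I$ is defined and finite — both are guaranteed by the hypotheses on $p\in\R^n_{>0}$ and the standing assumption that $\Div$ is finite on the closed orthant containing $p$. Modulo this check, the argument is essentially two lines of Jensen and a tower-property bookkeeping, so I do not expect any serious obstacle.
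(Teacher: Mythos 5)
Your argument is the same as the paper's: decompose by facet, apply Jensen's inequality conditionally on each $\Delta^I$, and use the tower property for unbiasedness and for the strictness case. The paper's proof is just a more compressed statement of exactly this, phrased in terms of a ``concentration map'' $\pi$, so there is nothing to flag.
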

We call such a $Q$ \emph{(facet-)concentrated}.
\begin{proof}
See Appendix \ref{pf_lem1}.
\end{proof}

We can parametrize the concentrated sparsifications in terms of the facet probabilities
\[x_I = \Pr(Q \in \Delta^I) \qquad x_I\in \R_{\geq 0}\]
along with the support points
\[y^I = \E[Q \mid Q\in \Delta^I] \qquad y^I\in \Delta^I.\]

So now we have a finite-dimensional problem:

\framed{\clabel{opt:dist}{SCDO}%
   \textbf{Problem.} \emph{Sparse Concentrated Distribution Optimization (\ref*{opt:dist}).}\\
   For convex $D: \R_{\geq 0}^n \to \R$,
  \begin{gather*}
  \text{minimize} \; \; f(x_I, y^I) \defeq \E[D(Q)] = \sum_I x_I D(y^I)  \; \; \text{subject to}\\
  \begin{alignedat}{2}
  x_I                        &\geq 0 \\
  S(x_I, y^I)           \defeq \sum_I x_I &= 1 \\
  G_i(x_I, y^I) \defeq \sum_{I} x_I y^I_i &= p_i &\quad& \text{for all $i$}\\
  y^I_i                         &= 0   &\quad& \text{for all } i \notin I\\
  \text{and} \qquad y^I_i        &\geq 0 &\quad& \text{for all $I, i$}.
  \end{alignedat}
  \end{gather*}
}

We write $Q \sim \cdist(x_I, y^I)$ to denote that the random variable $Q$ has the concentrated distribution corresponding to $(x_I, y^I)_{I}$, i.e., $Q$ takes on value $y^I\in \Delta^I$ with probability $x_I$. Note that there may be many choices of $(x_I, y^I)_I$ corresponding to the same random variable $Q$; in particular, if $x_I = 0$ then the choice of $y^I$ has no effect on the resulting distribution.

\subsection{ Optimality of Critical Points}

As written, the objective function of \ref{opt:dist} is convex, but its constraints are not. Therefore we cannot use techniques from convex optimization straight out of the box. Fortunately, in our case we find that a critical point of the Lagrangian still suffices to give a global optimum.

\begin{lemma}\label{thm:critical}
  Suppose that $D$ is smooth and that $Q$ is a concentrated unbiased sparsification of $p$.
  Suppose that for \emph{all} $(x_I,y^I)_I$ with $Q \sim \cdist(x_I, y^I)$
  (i.e., where $Q$ is $y^I\in \Delta^I$ with probability $x_I$)
  there exist $\nu, \lambda_i \in \R$,
  and $\mu_I \in \R_{\geq 0}$ such that 
  $$\nabla f(x_I, y^I) = \nu \nabla S(x_I, y^I) + \sum_{i=1}^n \lambda_i \nabla G_i(x_I, y^I) + \sum_{I: x_I = 0} \mu_I \nabla x_I.$$
  Then $Q$ is an efficient unbiased sparsification of $p$.
\end{lemma}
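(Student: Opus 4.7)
The central idea is to reparametrize \ref{opt:dist} via $z^I \defeq x_I y^I$, turning it into a convex program to which KKT sufficiency applies. In the variables $(x_I, z^I)_I$, the constraints $x_I \geq 0$, $\sum_I x_I = 1$, $\sum_I z^I_i = p_i$, $z^I_i \geq 0$, $z^I_i = 0$ for $i \notin I$ are all affine, so the feasible set is a convex polytope; and the objective becomes $F(x,z) \defeq \sum_I x_I D(z^I/x_I)$, a sum of perspectives of the convex function $D$ (with the natural value $0$ when $x_I = 0$ and $z^I = 0$), which is convex. The map $(x, y) \mapsto (x, xy)$ sends SCDO's feasible set onto this convex feasible set and preserves the objective, so the two problems share their optimizers.

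Next I would translate the hypothesis on $(\tilde x, \tilde y)$ into KKT/subgradient conditions for $F$ at $(\tilde x, \tilde z)$ with $\tilde z^I \defeq \tilde x_I \tilde y^I$. The perspective $g(x,z) \defeq xD(z/x)$ has partial derivatives $\partial g/\partial z_i = D_i(z/x)$ and $\partial g/\partial x = D(z/x) - (z/x)\cdot \nabla D(z/x)$. For each $I$ with $\tilde x_I > 0$, the hypothesis at the $y^I_i$-components gives $D_i(\tilde y^I) = \lambda_i$ for $i \in I$, and at the $x_I$-component gives $D(\tilde y^I) = \nu + \sum_i \lambda_i \tilde y^I_i$; substituting the first into the second produces $\partial F/\partial x_I = \nu$ and $\partial F/\partial z^I_i = \lambda_i$. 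For each $I$ with $\tilde x_I = 0$, the quantifier ``for all $(x_I, y^I)_I$'' lets us choose $y^I$ arbitrarily, and the slackness $\mu_I \geq 0$ then delivers $D(y^I) \geq \nu + \sum_i \lambda_i y^I_i$ for every such $y^I$ (and by continuity throughout $\bar\Delta^I$). This inequality is the subgradient-style condition that $F$ requires at a boundary point where it fails to be differentiable.

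I would then close by convexity of $F$. For any feasible $(x^*, z^*)$, the function $\phi(t) \defeq F((1-t)(\tilde x,\tilde z) + t(x^*,z^*))$ is convex on $[0,1]$, so $\phi(1) \geq \phi(0) + \phi'(0^+)$. Decomposing the one-sided derivative by coordinate $I$: the $\tilde x_I > 0$ terms contribute exactly $\nu(x^*_I - \tilde x_I) + \sum_{i \in I} \lambda_i(z^{*I}_i - \tilde z^I_i)$ by the stationarity identities above, while the $\tilde x_I = 0$ terms contribute $x^*_I D(z^{*I}/x^*_I) \geq \nu x^*_I + \sum_i \lambda_i z^{*I}_i$ by the boundary inequality (and $0$ trivially when $x^*_I = 0$ as well). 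Summing over $I$ and invoking $\sum_I x_I = 1$ and $\sum_I z^I_i = p_i$ for both $(\tilde x, \tilde z)$ and $(x^*, z^*)$, the telescoping collapses the lower bound to $\phi'(0^+) \geq 0$. Hence $F(x^*, z^*) \geq F(\tilde x, \tilde z)$, which shows $Q$ is efficient among concentrated sparsifications, and Lemma \ref{lem:face-concentration} extends this to all unbiased sparsifications.

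The principal obstacle is the boundary behavior at $\tilde x_I = 0$: $F$ is non-differentiable there, so classical KKT does not apply directly. The universal quantification over parametrizations built into the hypothesis is precisely what supplies, through the multipliers $\mu_I \geq 0$, the subgradient-style inequalities that must replace stationarity in the directional-derivative argument at these boundary points.
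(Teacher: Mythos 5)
The proposal is correct but takes a genuinely different route from the paper. The paper argues by contradiction: it supposes a strictly better $Q'$ exists, mixes the distributions of $Q$ and $Q'$ along a path $Q_t$, and uses the chain rule together with the gradient identity to show that the one-sided derivative of the concentrated objective along this path is nonnegative, contradicting the strict improvement. You instead expose the hidden joint convexity of \ref{opt:dist} by the perspective change of variables $z^I = x_I y^I$, under which the objective $\sum_I x_I D(z^I/x_I)$ becomes a sum of perspectives of the convex $D$ and every constraint becomes affine; the feasible image (where $z^I=0$ whenever $x_I=0$) is still convex, so you obtain an honest convex program. You then translate the hypothesis of the lemma into a subgradient certificate: at facets with $\tilde x_I>0$ the $y^I_i$- and $x_I$-components of the gradient identity give the smooth KKT stationarities $\partial g_I/\partial z^I_i = \lambda_i$ and $\partial g_I/\partial x_I = \nu$, and the quantification over all parametrizations $(x_I,y^I)$ of $Q$ together with $\mu_I\ge 0$ yields the global underestimator $D(y)\ge \nu + \lambda\cdot y$ at zero-probability facets, which is exactly the one-sided information the perspective requires where it is not differentiable. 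One minor point you leave implicit but should note: $\nu$ and $\lambda_i$ are independent of the arbitrary $y^J$ chosen on zero-probability facets (they are pinned by the components of the gradient identity at facets with $\tilde x_I>0$, of which there is at least one containing each index since $p_i>0$), so the same $\nu, \lambda$ work uniformly as $y^J$ ranges over $\bar\Delta^J$. The payoff of your approach is a direct rather than contrapositive argument, and it makes visible the convex structure that the paper's proof exploits only implicitly; the paper's route, on the other hand, stays within the given parametrization and requires no knowledge of perspective functions.
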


\begin{proof}
See Appendix \ref{pf_lem2}.% for a proof in the case where $D$ is smooth. Appendix \ref{pf_nosmooth} explains how to remove this condition.
\end{proof}

\subsection{ Solving \ref*{opt:dist} for Permutation-Invariant Divergences}

Note that all of our results up to this point have only relied on the convexity of $D(q) = \Div(q,p)$. We now assume that $\Div$ is also \pinv, so that $D(q) = F(q) + \alpha\cdot q$, where $\alpha\in \R^n$ and $F(q) = F(\sigma(q))$ is invariant under permutations $\sigma$ of the coordinates of $q$. First note that for any unbiased sparsification $Q$ of $p$,
\[\E[D(Q)] = \E[F(Q)] + \alpha\cdot \E[Q] = \E[F(Q)] + \alpha\cdot p,\]
so it is equivalent to minimize $\E[F(Q)]$.

Recall that without loss of generality we're assuming $p_1 \geq p_2 \geq \cdots \geq p_n > 0$.
Below, we define a family of ``preservative'' unbiased sparsifications, and then later we will show that preservative unbiased sparsifications are efficient.

\newcommand{\q}{\tilde{y}}
\newcommand{\al}{\tilde{x}}
\begin{defn}
Let $H = \{1,2,\ldots,h\}$ be the set of $m$-heavy indices, as defined in \autoref{def:heavy}.
Let $\ell = \frac{1}{m-h}\sum_{i > h} p_i$, and recall that $p_h > \ell \geq p_{h+1}$ and $h < m$.
Denote
\[\q^I_i \defeq \begin{cases}
  p_i & i\in I\cap H\\
  \ell & i \in I\setminus H\\
  0 & i \notin I.
\end{cases}\]

We say a concentrated unbiased $m$-sparsification $Q$ of $p$ is \emph{preservative} if $Q \sim \cdist(\al_I, \q^I)$ for some $\al_I$ with $\al_I = 0$ for all $I\not\supset H$.
\end{defn}

Note that the $\al_I$ in a preservative unbiased $m$-sparsification of $p$ must satisfy the unbiasedness constraints
$\sum_{I\supset H} \al_I \q^I = p .$
Plugging in the definition of $\q^I$, we find that the unbiasedness constraints are equivalent to constraints on the marginal inclusion probabilities:
\begin{align*}
s_i \defeq \sum_{I \ni i} \al_I &= \begin{cases} 1 & i \in H\\ \frac{p_i}{\ell} & i\notin H. \end{cases}
\end{align*}
In other words, the preservative $Q$ are precisely those which are produced by \ref{alg:usa}: the random variable $I = \supp(Q)$ always contains all of the heavy indices and contains each light index $i$ with probability proportional to $p_i$.

\begin{theorem}\label{thm:pinv}
If $\Div$ is convex and \pinv, then the preservative unbiased $m$-sparsifications $Q$ of $p\in \R_{>0}^n$ are efficient (among all nonnegative sparsifications). If $\Div$ is strictly convex, then these are the only efficient $m$-sparsifications.  
\end{theorem}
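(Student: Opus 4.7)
The plan is to invoke Lemma~\ref{thm:critical}: it suffices to exhibit, for each preservative $Q$, Lagrange multipliers $\nu$, $\lambda_i$, and $\mu_I\geq 0$ satisfying the stated gradient identity. Writing $D(q)=F(q)+\alpha\cdot q$ with $F$ \pinv, the linear piece contributes the constant $\alpha\cdot p$ to $\E[D(Q)]$ on every unbiased $Q$, so I may absorb $\alpha$ into $\lambda$ and focus on~$F$.

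The first step is to use permutation invariance to identify the candidate multipliers at the points $\q^I$ with $I\supset H$. Differentiating $F(y)=F(\sigma(y))$ gives the chain-rule identity $\partial_i F(y) = \partial_{\sigma^{-1}(i)} F(\sigma(y))$. Applying this with permutations that fix $H$ pointwise and permute the light indices (and ones that relate $\q^I$ to $\q^{I''}$ for different $I,I''\supset H$) yields two symmetries: $\gamma_i := \partial_i F(\q^I)$ for $i\in H$ depends only on $i$, and $\beta := \partial_i F(\q^I)$ for $i\in I\setminus H$ is a single scalar, independent of both $i$ and $I$. Setting $\lambda_i := \alpha_i + \gamma_i$ for $i\in H$, $\lambda_i := \alpha_i + \beta$ otherwise, and $\nu := F(\q^I)-\sum_{i\in H}\gamma_i p_i-(m-h)\beta\ell$ (well-defined because $F(\q^I)$ is common to all $I\supset H$) makes both the $y^I_i$-partials (for $i\in I$) and the $x_I$-partial of the Lagrangian vanish at every active facet $I\supset H$.

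The main obstacle is the dual-feasibility inequality: for every $J$ with $|J|=m$ and every $y$ in the closure of $\Delta^J$ we need $D(y) \geq \nu+\sum_i\lambda_i y_i$, equivalently $\psi(y)\geq\nu$ where $\psi(y) := F(y)-\sum_{i\in H}\gamma_i y_i-\beta\sum_{i\notin H}y_i$. For $J\supset H$ the vanishing-gradient calculation plus convexity shows the minimum of $\psi$ on $\bar{\Delta}^J$ equals $\psi(\q^J)=\nu$. For $J\not\supset H$ I plan a \emph{swap reduction}: pick $i\in H\setminus J$ and $k\in J\setminus H$, put $J' := (J\setminus\{k\})\cup\{i\}$, and given a minimizer $y^*$ of $\psi$ on $\bar{\Delta}^J$, form $\tilde y\in \bar{\Delta}^{J'}$ by exchanging the $i$- and $k$-coordinates of $y^*$. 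By \pinv\ $F(\tilde y)=F(y^*)$, while the linear part changes by $(\beta-\gamma_i)y^*_k$, so $\psi(\tilde y)-\psi(y^*) = (\beta-\gamma_i)y^*_k$. Iterating finitely many swaps lands in a facet containing $H$, provided $\gamma_i\geq\beta$ for every heavy $i$.

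The crux is this inequality $\gamma_i\geq\beta$, which is where convexity meets \pinv. To prove it, fix any $I'\supset H$ and any $k\in I'\setminus H$, and let $y(t)$ be the line agreeing with $\q^{I'}$ off $\{i,k\}$ with $y(t)_i = t$ and $y(t)_k = p_i+\ell-t$. Then $y(\ell)$ is obtained from $y(p_i)=\q^{I'}$ by the transposition $(i\,k)$, so $g(t) := F(y(t))$ satisfies $g(\ell)=g(p_i)$ by \pinv. Convexity of $g$ with equal endpoints forces the nondecreasing $g'$ to vanish somewhere in $[\ell,p_i]$, so $g'(p_i)\geq 0$; but $g'(p_i) = \gamma_i-\beta$. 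This closes the efficiency argument. For the uniqueness claim under strict convexity, I would combine the strict form of Lemma~\ref{lem:face-concentration} (forcing any efficient $Q$ to be concentrated) with the necessity direction of KKT: strict convexity makes $\nabla D$ injective on each facet, so the equations $\partial_i D(y^I) = \lambda_i$ on the support determine $y^I$ uniquely, and combining with unbiasedness and the permutation-invariant structure of $\lambda$ forces $y^I = \q^I$ and the support to contain $H$ with the preservative marginal inclusion probabilities.
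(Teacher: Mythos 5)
Your efficiency argument is correct and follows the same high-level route as the paper: both reduce to Lemma~\ref{thm:critical} and exhibit multipliers $\nu,\lambda,\mu_I$ with the same structure (constant $\beta$ on the light part of $\lambda$, etc.). Where you differ is in how you prove the dual-feasibility inequality for facets $J\not\supset H$. The paper observes that a symmetric convex $F$ is Schur convex, concludes that the full sorted gradient $(a_1,\ldots,a_h,b,\ldots,b,c,\ldots,c)$ is weakly decreasing, and then uses the rearrangement inequality in one shot. You instead prove only the single inequality $\gamma_i\geq\beta$ that you actually need, via a very clean one-dimensional argument along the transposition line $y(t)$ (convex $g$ with equal endpoints forces $g'(p_i)\geq 0$), and then iterate single heavy-for-light swaps rather than doing a full sort. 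That is a genuinely nice local alternative to the Schur-convexity step and is, if anything, more self-contained. One cosmetic fix: you phrase the swap step in terms of a minimizer $y^*$ of $\psi$ on $\bar\Delta^J$, whose existence you have not argued; there is no need for this---the swap inequality $\psi(\tilde y)\leq\psi(y)$ holds pointwise for every $y\in\bar\Delta^J$, so just iterate from an arbitrary $y$.

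There are two real gaps. First, your argument implicitly uses smoothness of $F$ throughout ($\partial_iF$, $g'$, Fr\'echet derivatives in Lemma~\ref{thm:critical}), but the theorem only hypothesizes convexity. The paper handles this separately (Appendix~\ref{pf_nosmooth}) by approximating $F$ with a sequence of smooth, convex, \pinv\ functions $\widetilde F_k$ and passing to the limit; you would need to add such a step or restrict to smooth $F$.

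Second, your uniqueness sketch does not go through as written. You invoke ``the necessity direction of KKT'' to conclude that an efficient $Q'$ must satisfy $\partial_iD(y^{I})=\lambda_i$ on its support, but Problem \ref{opt:dist} is not a convex program (the constraints $\sum_I x_I y_i^I = p_i$ are bilinear), and KKT stationarity is not automatically necessary for local, let alone global, optima of a non-convex problem without a constraint qualification; you have not verified one. The paper avoids this entirely: it forms the convex mixture $Q_t$ between a fully supported preservative $Q$ and the putative efficient $Q'$, notes that $t\mapsto\E[D(Q_t)]$ is affine and hence every $Q_t$ is efficient and therefore facet-concentrated (this already pins down $y^{I'}=\q^I$ for $I\supset H$), and then differentiates the already-established stationarity identity at $t=0$ to force $\sum_{I\not\supset H}\mu_I\,x_I' = 0$, which together with $\mu_I>0$ (from strict convexity, which makes $a_h>b$) gives $x_I'=0$ for $I\not\supset H$. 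Your closing sentence (``strict convexity makes $\nabla D$ injective on each facet, \ldots, forces $y^I=\q^I$ and the support to contain $H$'') is a plausible heuristic but is not a proof as stated; it presupposes exactly the stationarity you have not established and does not explain why $x_I'$ must vanish off the preservative facets.
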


\begin{proof}
See Appendices \ref{pf_thm1} through \ref{sec:uniqueness}.
\end{proof}

\section{ Additively Separable Divergences}\label{sec:separable}

Now we turn to the case where $D(Q) = \sum_i f_i(Q_i)$ is strictly convex and additively separable, but not necessarily \pinv. We also remove the constraint that the $Q_i$ have to be nonnegative (although we still write $p\in \R^n_{>0}$ without loss of generality.) Our proof takes a similar route to the \pinv\ case. We begin by strengthening \cref{lem:face-concentration} to
\emph{coordinate concentration}, meaning that if
$Q$ is an EUS (with no constraint on the sum of the coordinates of $Q$)
then for each coordinate~$i$ there is only one possible
nonzero value that $Q_i$ can take.
While \cref{lem:face-concentration} already allowed us to reduce our
problem to a finite (but exponentially large) number of dimensions, coordinate concentration reduces it further to a \emph{convex}\footnote{In fact, the function we find ourselves needing to optimize
has the form of an \emph{$f$-divergence}~\cite{amari}.
}\ optimization problem in $n$ variables---the inclusion probabilities $s_i$.

This allows us to solve this convex optimization problem
via a straightforward application
of Lagrange multipliers and the 
Karush--Kuhn--Tucker (KKT) conditions
\cite[Section 5.5.3]{boyd-vandenberghe}. The details of coordinate concentration and the remainder of the additively separable case are deferred to Appendix \ref{app:separable}.

We can characterize the efficient sparsifications of $p$ with respect to a separable divergence $\Div(q,p) = \sum_i f_i(q_i)$ as follows:

\framed{\clabel{alg:uwsa}{US-AS}%
   \textbf{Algorithm.} \emph{Unbiased Sparsification for Additively Separable Divergences (\ref*{alg:uwsa}).}\\
  Given $\Div(q,p) = D(q) = \sum_i f_i(q_i)$ strictly convex, define $g_i(x) = x f'_i(x) - f_i(x) + f_i(0)$. Let $\lambda > 0$ be the unique value such that
  \begin{equation*}
    \sum_i \min\left(1, \tfrac{p_i}{g_i^{-1}(\lambda)}\right) = m.
  \end{equation*}
  Declare index $i$ to be \emph{heavy} if $g_i(p_i) \geq \lambda$ (and \emph{light} otherwise), and let $h < m$ be the number of heavy indices.
  Sample $m-h$ of the light indices with specified marginals $s_i \defeq p_i/g_i^{-1}(\lambda) < 1$,
  and call the sample set $I$.
  Return the vector $Q\in\R^n$ whose coordinates are given by
  \begin{equation*}
    Q_i = \begin{cases}
      p_i, & \text{if $i$ is heavy;} \\
      g_i^{-1}(\lambda),   & \text{if $i\in I$;} \\
      0,   & \text{otherwise.}
    \end{cases}
  \end{equation*}
}

\begin{theorem}
\label{thm:separable}

Let $\Div$ be a strictly convex, additively separable divergence defined on~$\R^n$.
Then the efficient (with respect to $\Div$) unbiased $m$-sparsifications
of $p\in \R^n_{>0}$
are precisely those produced by \ref{alg:uwsa}.
\end{theorem}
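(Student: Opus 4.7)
The plan is to follow the roadmap outlined in \autoref{sec:separable}: first strengthen facet concentration (\cref{lem:face-concentration}) to \emph{coordinate} concentration, reducing the EUS problem to a convex program in the marginal inclusion probabilities $s_i \defeq \Pr(Q_i\neq 0)$, and then solve that program by KKT. For coordinate concentration, additive separability decouples the objective as $\E[\Div(Q,p)] = \sum_i \bigl[s_i\,\E[f_i(Q_i)\mid Q_i\neq 0] + (1-s_i)f_i(0)\bigr]$. Strict convexity of $f_i$ and Jensen's inequality force the conditional law of $Q_i$ on $\{Q_i\neq 0\}$ to be a point mass $v_i$; otherwise, replacing $Q_i$ by $v_i\mathbf{1}[Q_i\neq 0]$ strictly decreases the $i$th summand. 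Since $p_i>0$ implies $v_i>0$, this replacement leaves $\supp(Q)$ pointwise unchanged, so it preserves both the sparsity constraint $|\supp(Q)|\leq m$ and, upon taking $v_i = p_i/s_i$, the unbiasedness $\E[Q_i]=p_i$.

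After coordinate concentration, any EUS is determined by marginals $s\in (0,1]^n$ with $\sum_i s_i=m$ together with a joint distribution on $m$-element subsets realizing those marginals, the latter existing by the sampling-with-specified-marginals results of \cref{sec:marginals}. The objective collapses to
\[
h(s) \defeq \sum_i \bigl[s_i f_i(p_i/s_i) + (1-s_i) f_i(0)\bigr],
\]
each summand being a perspective of the strictly convex $f_i$ and hence strictly convex in $s_i$. A direct calculation gives $\partial h/\partial s_i = -g_i(p_i/s_i)$ with $g_i(x) = xf_i'(x)-f_i(x)+f_i(0)$ and $g_i'(x) = xf_i''(x)>0$, so $g_i$ is strictly increasing on $(0,\infty)$ from $g_i(0)=0$. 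Since $p_i>0$ rules out $s_i^*=0$, the KKT stationarity $-g_i(p_i/s_i^*)+\lambda+\mu_i=0$, $\mu_i\geq 0$, $\mu_i(s_i^*-1)=0$ yields the dichotomy: if $s_i^*=1$ (heavy) then $g_i(p_i)\geq \lambda$; if $s_i^*<1$ (light) then $s_i^* = p_i/g_i^{-1}(\lambda)$ and the nonzero value on that coordinate is $g_i^{-1}(\lambda)$---precisely the prescription of \autoref{alg:uwsa}. The map $\lambda\mapsto\sum_i\min\bigl(1, p_i/g_i^{-1}(\lambda)\bigr)$ is continuous and strictly decreasing, so the defining equation has a unique solution $\lambda>0$. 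Strict convexity of $h$ pins $s^*$ down uniquely, and coordinate concentration then propagates this uniqueness to the full characterization of efficient sparsifications.

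I expect the main obstacle to be the coordinate-concentration step: while Jensen's inequality makes the intuition immediate, the formal argument must handle that $Q_i$ may carry a continuous conditional distribution and that the coordinates are coupled by the global sparsity constraint $|\supp(Q)|\leq m$. The key observation is that the replacement $Q_i\mapsto v_i\mathbf{1}[Q_i\neq 0]$ only redistributes mass \emph{within} each coordinate, conditional on being in the support, so the random set $\supp(Q)$---and hence all joint support constraints---remain untouched. Once this is in hand, the remainder is a routine application of convex duality.
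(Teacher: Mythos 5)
Your proposal is correct and follows essentially the same route as the paper: coordinate concentration via Jensen's inequality (using additive separability and strict convexity of the $f_i$), reduction to the convex program in the inclusion probabilities $s_i$, and a KKT analysis yielding $s_i = \min\bigl(1, p_i/g_i^{-1}(\lambda)\bigr)$ with $\lambda$ determined by $\sum_i s_i = m$. The only minor gap is your unjustified assertion that ``$p_i>0$ rules out $s_i^*=0$''---the paper handles the open constraint $s_i>0$ by temporarily tightening it to $s_i\geq\epsilon$ and showing the optimum never touches this boundary---but this is a technicality rather than a different approach.
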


In general the sparsifications produced by this procedure do \emph{not} satisfy the stronger constraint $\sum_i Q_i = \sum_i p_i$, but it is not hard to verify that if $D$ is also \pinv\ then this procedure aligns with \ref{alg:usa} and the resulting random variable does satisfy that constraint.

Note also that this result shows that (for additively separable divergences) an efficient sparsification of $p\in \R^n_{>0}$ must necessarily be nonnegative, whereas the proof of \autoref{thm:pinv} required this constraint to be enforced artificially.

\section{Generalizations and Open Questions}\label{sec:generalizations}

If we relax our definition of an $m$-sparsification from
requiring $\abs{\supp(Q)} \leq m$ to merely $\E[\abs{\supp(Q)}] \leq m$ (as in
\cite{chen-horvath-richtarik, konecny-richtarik, wang-et-al, wangni-wang-liu-zhang})
then we find that the efficient sparsifications are still characterized
by the same marginal inclusion probabilities $s_i$, albeit with more
leeway in the survivor sampling procedure.

What if our target vector $p$ is allowed to have negative (or complex, or other vector-valued) entries? For this question to make sense we must assume that $\Div$ is defined on the cone generated by the coordinates of $p$. For instance, for $p\in \R^{n\times k}$ it is perfectly sensible to ask (as in the second federated learning example in \autoref{sec:intro}) for the random variable $Q$ in $\R^{n\times k}$ with at most $m < n$ nonzero \emph{rows} which minimizes the expected squared Euclidean distance between $p$ and $Q$. We can always flip the signs on the $p$-inputs of $\Div(q, p)$ to treat $p$ as being in $\R_{\geq 0}^n$ and use the modified divergence in the optimization problem above. Flipping signs on the inputs preserves additive separability, so if $\Div$ is additively separable then \autoref{alg:uwsa} still applies. But flipping signs may destroy permutation-invariance, so even if $\Div$ is \pinv\ \autoref{alg:usa} may not apply. %If we are lucky and the sign-flipped divergence is still \pinv\ (as in the example of suared Euclidean distance) then \autoref{alg:usa} does apply---but note that the resulting sparsification only preserves the sum of the absolute values $\sum_i \abs{Q_i} = \sum_i \abs{p_i}$.

The federated learning literature is also interested in the case where the ``default'' value of $Q_i$ may be some nonzero value $z = (z_1,\dots,z_n)$ (typically the same for all $i$); that is, where $\supp(Q) \defeq \set{i: Q_i \neq z_i}$. In this case we can just replace $Q$, $p$, and $\Div$ with $\tilde{Q} = Q-z$, $\tilde{p} = p-z$, and $\widetilde{\Div}(\tilde{q}, \tilde{p}) = \Div(\tilde{q}+z, \tilde{p}+z) = \Div(q, p)$. Again, this transformation preserves additive separability but may destroy permutation-invariance and positivity of $p$. If one is going to allow a constant nonzero default value $z = (z_0,\dots, z_0)$ then it is also interesting to optimize the expected divergence over the choice of $z_0$ (for given, fixed $p$).

\begin{figure}
  \begin{centering}
    \begin{subfigure}{.2\textwidth}
      \includegraphics[width=4cm,page=1]{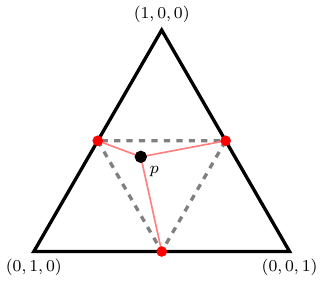}
      \caption{}
    \end{subfigure}%
    \hspace{1cm}
    \begin{subfigure}{.2\textwidth}
      \includegraphics[width=4cm,page=2]{n3m2.pdf}
      \caption{}
    \end{subfigure}
    \caption{Illustration of the probability simplex when using \autoref{alg:usa} on a probability distribution with $n=3$ and $m=2$.}\label{fig:usa}
  \end{centering}
\end{figure}

There remain several open questions yet to be answered. We noted above that in the non-\pinv\ case imposing the additional constraint $\sum_i Q_i = \sum_i p_i$ will change the optimal solution. What is the new optimum?

We also noted that the proof in \autoref{sec:pinv} only shows that the output of \ref{alg:usa} is efficient for \pinv\ divergences among \emph{nonnegative} sparsifications, whereas the proof in \autoref{sec:separable} shows that the same output is efficient among \emph{all} sparsifications as long as the divergence is also additively separable. We conjecture that the additional condition of additive separability is not actually necessary here.
%: that the output of \ref{alg:usa} is efficient among all sparsifications for \pinv\ divergences.

Finally, we are also interested in knowing the answer for divergences which are neither \pinv\ nor additively separable, for instance, the squared \emph{Mahalanobis distance} that is the Bregman divergence with $G(x) = x^T A x$ for a positive definite matrix $A$.

%In sampling applications, optimizing over such a divergence may allow us to account for hypothesized correlations between our samples. In particular, we have mentioned that there are many distributions $\alpha_I$ on subsets of size $m$ which yield the same marginal inclusion probabilities $s_i$ and which are therefore all efficient according to the class of divergences discussed in this paper. We hope that broadening this result to a larger class of divergences will allow us to give a principled way to choose one over another for a given application.

\bibliography{aux_bib}

% Generated by IEEEtran.bst, version: 1.14 (2015/08/26)
\begin{thebibliography}{10}
\providecommand{\url}[1]{#1}
\csname url@samestyle\endcsname
\providecommand{\newblock}{\relax}
\providecommand{\bibinfo}[2]{#2}
\providecommand{\BIBentrySTDinterwordspacing}{\spaceskip=0pt\relax}
\providecommand{\BIBentryALTinterwordstretchfactor}{4}
\providecommand{\BIBentryALTinterwordspacing}{\spaceskip=\fontdimen2\font plus
\BIBentryALTinterwordstretchfactor\fontdimen3\font minus
  \fontdimen4\font\relax}
\providecommand{\BIBforeignlanguage}[2]{{%
\expandafter\ifx\csname l@#1\endcsname\relax
\typeout{** WARNING: IEEEtran.bst: No hyphenation pattern has been}%
\typeout{** loaded for the language `#1'. Using the pattern for}%
\typeout{** the default language instead.}%
\else
\language=\csname l@#1\endcsname
\fi
#2}}
\providecommand{\BIBdecl}{\relax}
\BIBdecl

\bibitem{dist_est1}
Y.~Zhang, J.~Duchi, M.~I. Jordan, and M.~J. Wainwright, ``Information-theoretic
  lower bounds for distributed statistical estimation with communication
  constraints,'' \emph{Advances in Neural Information Processing Systems}, pp.
  2328 -- 2336, 2013.

\bibitem{dist_est2}
A.~Garg, T.~Ma, and H.~Nguyen, ``On communication cost of distributed
  statistical estimation and dimensionality,'' \emph{Advances in Neural
  Information Processing Systems}, pp. 2726 -- 2734, 2014.

\bibitem{dist_est3}
M.~Braverman, A.~Garg, T.~Ma, and H.~L. Nguyen, ``Communication lower bounds
  for statistical estimation problems via a distributed data processing
  inequality,'' \emph{Proceedings of the Forty-Eighth Annual ACM Symposium on
  Theory of Computing}, pp. 1011 -- 1020, 2016.

\bibitem{dist_est4}
L.~P. Barnes, Y.~Han, and A.~\"Ozg\"ur, ``Lower bounds for learning
  distributions under communication constraints via {F}isher information,''
  \emph{The Journal of Machine Learning Research}, vol.~21, no.~1, pp. 9583 --
  9612, 2020.

\bibitem{fed_orig}
J.~Kone\v{c}n\'y, H.~B. McMahan, F.~X. Yu, P.~Richt\'arik, A.~T. Suresh, and
  D.~Bacon, ``Federated learning: Strategies for improving communication
  efficiency,'' \emph{arXiv:1610.492v2}, 2017.

\bibitem{fed_survey}
P.~K. et~al., ``Advances and open problems in federated learning,''
  \emph{Foundations and Trends in Machine Learning}, vol.~14, no. 1-2, pp. 1 --
  210, 2021.

\bibitem{rtopk}
L.~P. Barnes, H.~A. Inan, B.~Isik, and A.~\"Ozg\"ur, ``r{T}op-k: A statistical
  estimation approach to distributed {SGD},'' \emph{IEEE Journal on Selected
  Areas in Information Theory}, vol.~1, no.~3, pp. 897 -- 907, 2020.

\bibitem{konecny-richtarik}
J.~Kone\v{c}n\'y and P.~Richt\'arik, ``Randomized distributed mean estimation:
  accuracy vs.\ communication,'' \emph{Frontiers in Applied Mathematics and
  Statistics}, vol.~4, no.~62, 2018.

\bibitem{wang-et-al}
H.~Wang, S.~Sievert, Z.~Charles, S.~Liu, S.~Wright, and D.~Papailiopoulos,
  ``{ATOMO}: {C}ommunication-efficient learning via atomic sparsification,'' in
  \emph{Advances in Neural Information Processing Systems}, vol.~31, 2018.

\bibitem{wangni-wang-liu-zhang}
J.~Wangni, J.~Wang, J.~Liu, and T.~Zhang, ``Gradient sparsification for
  communication-efficient distributed optimization,'' in \emph{Advances in
  Neural Information Processing Systems}, vol.~31, 2018.

\bibitem{fed_survey2}
X.~Cao, T.~Ba\c{s}ar, S.~Diggavi, Y.~Eldar, K.~Letaief, H.~V. Poor, and
  J.~Zhang, ``Communication-efficient distributed learning: An overview,''
  \emph{IEEE Journal on Selected Areas in Communications}, vol.~41, no.~4,
  2023.

\bibitem{horvath-richtarik}
S.~Horv\'ath and P.~Richt\'arik, ``A better alternative to error feedback for
  communication-efficient distributed learning,'' in \emph{International
  Conference on Learning Representations}, 2021.

\bibitem{chen-horvath-richtarik}
W.~Chen, S.~Horv\'ath, and P.~Richt\'arik, ``Optimal client sampling for
  federated learning,'' \emph{Transactions on Machine Learning Research}, 2022.

\bibitem{tille}
Y.~Till\'e, \emph{Sampling Algorithms}.\hskip 1em plus 0.5em minus 0.4em\relax
  Springer, 2006.

\bibitem{barvinok}
A.~Barvinok, \emph{A Course in Convexity}.\hskip 1em plus 0.5em minus
  0.4em\relax American Mathematical Society, 2002.

\bibitem{amari}
S.~Amari, ``Divergence, optimization and geometry,'' in \emph{Neural
  Information Processing: 16th International Conference, ICONIP 2009, Part~I},
  ser. Lecture Notes in Computer Science, vol.~5863, C.~S. Leung, M.~Lee, and
  J.~H. Chan, Eds.\hskip 1em plus 0.5em minus 0.4em\relax Springer, 2009, pp.
  185--193.

\bibitem{dieudonne-analysis}
J.~Dieudonn{\'e}, \emph{Foundations of Modern Analysis}.\hskip 1em plus 0.5em
  minus 0.4em\relax Academic Press, 1969.

\bibitem{dudley}
R.~M. Dudley, ``On second derivatives of convex functions,'' \emph{Mathematica
  Scandinavica}, vol.~41, pp. 159--174, 1977.

\bibitem{boyd-vandenberghe}
S.~Boyd and L.~Vandenberghe, \emph{Convex Optimization}.\hskip 1em plus 0.5em
  minus 0.4em\relax Cambridge University Press, 2004.

\bibitem{ghomi}
M.~Ghomi, ``The problem of optimal smoothing for convex functions,''
  \emph{Proceedings of the American Mathematical Society}, vol. 130, no.~8, pp.
  2255--2259, March 2002.

\bibitem{garling}
D.~J.~H. Garling, \emph{Inequalities: A Journey into Linear Analysis}.\hskip
  1em plus 0.5em minus 0.4em\relax Cambridge University Press, 2007.

\end{thebibliography}
\bibliographystyle{IEEEtran}

\clearpage
\appendices
\section{Proofs}
\subsection{Proof of Lemma \ref{lem:face-concentration}} \label{pf_lem1}
Suppose $Q\in \Delta$ is an unbiased sparsification of $p$. Let $\pi$ denote the concentration map; i.e. $\pi(Q)$ is the random
variable such that $\pi(Q) = \E[Q \mid Q \in \Delta^I] = q^I \in \Delta^I$ with
probability $\Pr(Q \in \Delta^I)$.
Then clearly $\E(\pi(Q)) = \E(Q) = p$ so, $\pi(Q)$ is also an unbiased sparsification of $p$. By Jensen's inequality, $\E[D(\pi(Q))] \leq \E[D(Q)]$, with a strict inequality if $\Div$ is strictly convex and $Q$ was not already facet concentrated.

\subsection{Proof of Lemma \ref{thm:critical}} \label{pf_lem2}
  By way of contradiction, suppose $Q\sim \cdist(\alpha_I, u^I)$ is a concentrated unbiased sparsification of $p$ which satisfies the premise of the theorem but fails to be efficient.
  That means, using Lemma \ref{lem:face-concentration}, there is
  some other concentrated unbiased sparsification $Q'\sim \cdist(\beta_I, v^I)$ where
  $\E[D(Q')] < \E[D(Q)]$.
  
  Since we need only find one contradictory $(\alpha_I, u^I)$, we choose to take $u^I = v^I$ whenever $\alpha_I = 0$ or $\beta_I = 0$. That is, for any facet $\Delta^I$ where $Q$ has probability $\alpha_I = 0$ of appearing, we set the corresponding point $u^I$ (which value $Q$ never actually takes) equal to the point $v^I$ which $Q'$ may (or may never) take in $\Delta^I$, and vice versa. (If both $\alpha_I = \beta_I = 0$, pick $u^I = v^I\in \Delta^I$ arbitrarily.)

  For $0 \leq t \leq 1$, let
  $Q_t$ be the random variable corresponding to the convex mixture of distributions of $Q$ and $Q'$, where for any measurable set $A \subset \cup_I \Delta^I$ 
  we have
  $$\Pr(Q_t \in A) = (1-t) \Pr(Q \in A) + t \Pr(Q' \in A).$$
  Thus $Q_0 = Q$ and $Q_1 = Q'$. Note that since $Q$ and $Q'$ are both unbiased sparsifications of $p$, it is clear that $Q_t$ is also an unbiased sparsification of $p$.
  When $0 < t < 1$, the random variable $Q_t$ is not necessarily concentrated, as it can take on up to two different values in $\Delta^I$ rather than just one.  Let $g(t) \defeq \E[D(Q_t)]$, and note that
  $$g(t) = (1-t) \E[D(Q)] + t \E[D(Q')]$$
  is an affine function of $t$.
  In particular, $g'(0) = \E[D(Q')] - \E[D(Q)] < 0$.
  
  Let $\pi(Q_t)$ denote the concentration of $Q_t$ as above, and let $h(t) \defeq \E[D(\pi(Q_t))]$.
  We have $h(t) \leq g(t)$ for all $t$ by convexity of $D$.
  Also note that $h(0) = g(0)$ and $h(1) = g(1)$, since both $Q$ and $Q'$ are already concentrated.
  
  Now we show that $h : [0,1] \to \R$ is a smooth function.
  Since $Q_t$ takes value $u^I$ with probability $(1-t)\alpha_I$ and $v^I$ with probability $t \beta_I$, we can calculate that $\pi(Q_t) \sim \cdist(x_I(t), y^I(t))$, where
  \begin{align*}
    x_I(t) &:= (1-t) \alpha_I + t \beta_I,\\
    y^I(t) &:= \begin{cases}
      \frac{(1-t) \alpha_I u^I + t \beta_I v^I}{(1-t) \alpha_I + t \beta_I} &\text{if } \alpha_I, \beta_I > 0\\
      v^I = u^I & \text{if }\alpha_I = 0 \text{ or } \beta_I = 0.
      \end{cases}
  \end{align*}

  It is clear that $x_I(t)$ and $y^I(t)$ are smooth in $t$ for any given $I$.
  We have
  \begin{equation*}
    h(t) = \E[D(\pi(Q_t))] = \sum_{I : \alpha_I + \beta_I > 0} x_I(t) D(y^I(t)).
  \end{equation*}
  Since $D$ is smooth and the maps $t \mapsto x_I(t)$ and $t \mapsto y^I(t)$ are smooth, we know that $h$ is smooth.

  Now,
  \begin{align*}
    h'(0) &= \lim_{t \to 0^+} {h(t) - h(0) \over t} \\
    &= \lim_{t \to 0^+} {h(t) - g(0) \over t} \quad \text{(since $h(0) = g(0)$)} \\
    &\leq \lim_{t \to 0^+} {g(t) - g(0) \over t} \quad \text{(since $h(t) \leq g(t)$ and $t > 0$)} \\
    &= g'(0) \\
    &= \E[D(Q')] - \E[D(Q)] \\
    &< 0.
    \end{align*}

  Since $\gamma(t) := (x_I(t), y^I(t))_I$ is smooth, 
  we can apply the chain rule:
  \begin{align*}
    h'(0) &= (f \circ \gamma)'(0) \\
    &= (\nabla f)(\gamma(0)) \cdot \gamma'(0) \\
    &= (\nabla f)((\alpha_I, u^I)_I) \cdot \gamma'(0) \\
    &= \left(\nu \nabla S + \sum_{i=1}^n \lambda_i \nabla G_i + \sum_{I : \alpha_I=0} \mu_I \nabla x_I\right) \cdot \gamma'(0).
  \end{align*}
  But $\gamma(t)$ satisfies all the constraints, and hence
  $\gamma'(0)$ is perpendicular to the gradients $\nabla S$ and $\nabla G_j$.
  Furthermore, we also have that $\gamma'(0)$ is nonnegative 
  on all the $x_I$ components for which $\alpha_I = 0$, due to the
  inequality constraints $x_I \geq 0$.  Since every such $\mu_I \geq 0$,
  we conclude that $(\mu_I \nabla x_I) \cdot \gamma'(0) \geq 0$. 
  Hence $h'(0) \geq 0$, but this contradicts   
  $h'(0) < \E[D(Q')] - \E[D(Q)] < 0$.

\subsection{Proof of Theorem \ref{thm:pinv}} \label{pf_thm1}
 For now, assume that $F$ is smooth. Following \cref{thm:critical}, for every $(x_I, y^I)_I$ corresponding to a preservative sparsification $Q$, we will find multipliers $\nu, \lambda_i\in \R$ and $\mu_I \in \R_{\geq 0}$ such that
  \[\nabla f(x_I, y^I) = \nu \nabla S(x_I, y^I) + \sum_{i=1}^n \lambda_i \nabla G_i(x_I, y^I) + \sum_{I:x_I = 0} \mu_I \nabla x_I,\]
  where
  \begin{align*}
    f(x_I,y^I) &= \sum_{I} x_I F(y^I),\\
    S(x_I,y^I) &= \sum_I x_I,\\
\text{and} \quad G_i(x_I,y^I) &= \sum_{I \ni i} x_I y^I_i
  \end{align*}
as in \ref{opt:dist}. Note that if $Q\sim \cdist(x^I, y^I)$ is preservative then $y^I = \q^I$ whenever $x_I > 0$. It is straightforward to compute the various gradients:
\[\begin{array}{r|cccc}
& f(x_I, y^I) & S(x_I, y^I) & G_j(x_I, y^I) & x_J \\\hline
\pd{x_I} & F(y^I) & 1 & y^I_j & \delta_{J = I}\\
\pd{y_i^I} & x_I \pd[F]{q_i}(y^I) & 0 & x_I \delta_{j=i} & 0 \\
\end{array}\]
Let $I_0 = \{1,2,\ldots,m\} \supset H$, so that \vspace{-1em}
\[\q^{I_0} = (\overbrace{p_1, p_2, \ldots, p_h \vphantom{\ell}}^{h}, \overbrace{\ell, \ell, \ldots, \ell}^{m-h}, \overbrace{0, 0, \ldots, 0}^{n-m}).\]
Note that $p_h \geq \ell$. Because $F$ is \pinv, we have $\pd[F]{q_i}(\q^{I_0}) = \pd[F]{q_j}(\q^{I_0})$ whenever $\q^{I_0}_i = \q^{I_0}_j$. Therefore we can write \vspace{-1em}
\[\nabla F(\q^{I_0}) = (\overbrace{a_1, a_2 \dots, a_h\vphantom{b}}^{h}, \overbrace{b, b, \dots, b}^{m-h}, \overbrace{c, c, \dots, c\vphantom{b}}^{n-m}).\]

Now we determine the $\lambda_j$.
A brief look at the gradient table shows that we must have
\[\lambda_j = \pd[F]{q_j}(\q^I) \quad\text{ if $j \in I$ and $x_I > 0$}.\]
In particular, as $I$ varies over those $m$-element sets containing $j$
where $x_I > 0$, we need $\pd[F]{q_j}(\q^I)$ to be constant.
Indeed, for any $I$ with $x_I > 0$,
\[\pd[F]{q_j}(\q^I) = \begin{cases} a_j & j \in H \subset I \\ b & j\in I\setminus H \\ 0 & j\notin I
\end{cases}\]
does not depend on $I\ni j$.
Hence we have
\[\lambda = (\overbrace{a_1,a_2,\ldots,a_h\vphantom{b}}^h,\overbrace{b,b,\ldots,b}^{n-h}).\]

With this $\lambda$ we have matched $\nabla f$ on the $\pd{y^I_i}$ components, but we have not yet matched $\nabla f$ on the $\pd{x_I}$ components.  For that, we will need to use a multiple $\nu$ of $\nabla S$ as well as nonnegative multiples $\mu_I$ of the $\nabla x_I$ wherever $x_I = 0$.

We want to show that if $(x_I, y^I)_I$ is preservative then there exist $\nu$ and $\mu_I \geq 0$ (with $\mu_I = 0$ whenever $x_I > 0$) such that:
\[F(y^I) = \nu + \lambda \cdot y^I + \mu_I.\]

If $x_I > 0$ then $\mu_I = 0$ and preservativity requires that $I\supset H$ and $y^I = \q^I$. Permutation-invariance tells us that $F(\q^I) = F(\q^{I_0})$ and it is easy to see that $\lambda \cdot \q^I = \lambda \cdot \q^{I_0}$, so the constraints for $x_I > 0$ are satisfied by setting
\begin{align*}
  \nu = F(\q^{I}) - \lambda \cdot \q^{I} = F(\q^{I_0}) - \lambda \cdot \q^{I_0}.
\end{align*}

If $x_I = 0$ then $y^I \in \Delta^I$ is unconstrained by preservativity, and in order to have $\mu_I \geq 0$ we must show that
\begin{equation}
  F(y^I) - \lambda \cdot y^I \geq \nu = F(\q^{I_0}) - \lambda \cdot \q^{I_0}. \label{eqn:y-vs-q}
\end{equation}

If $I \supseteq H$, then $\lambda \cdot y^I = \nabla F (\q^I) \cdot y^I$ 
(because $\lambda$ and $\nabla F(\q^I)$ agree wherever $y^I$ is nonzero), so \ineqref{eqn:y-vs-q} holds by convexity of $F$:
\begin{gather*}
  F(y^I) \geq F(\q^I) + \nabla F(\q^I) \cdot (y^I - \q^I) \\
  = F(\q^I) + \lambda \cdot (y^I - \q^I)\\
\implies F(y^I) - \lambda \cdot y^I \geq F(\q^I) - \lambda \cdot \q^I = \nu.
\end{gather*}

It remains to show \ineqref{eqn:y-vs-q} for $I\not\supset H$. First note that $F$ is Schur convex since it is symmetric and convex.
Hence, 
$\pd[F]{q_i}(y) \geq \pd[F]{q_j}(y)$
whenever $y_i \geq y_j$.
Since the components of $\q^{I_0}$ are decreasing, we know that
the components of $\nabla F (\q^{I_0})$ are decreasing:
$$a_1 \geq \cdots \geq a_h \geq b \geq c.$$
In particular, $\lambda$ is decreasing.

Let $\sigma$ be a permutation such that $\sigma(y^I)$ is decreasing.
Then $\sigma(y^I) \in \Delta^{I_0}$.
Furthermore, since $\lambda$ is decreasing and $y^I \geq 0$, $\sigma$ is the permutation which maximizes
the inner product $\lambda \cdot \sigma(y^I)$.
Thus we have:
\begin{alignat*}{2}
  F(y^I) - \lambda \cdot y^I &= F(\sigma(y^I)) - \lambda \cdot y^I &\quad& \text{($F$ is perm.-inv.)} \\
  &\geq F(\sigma(y^I)) - \lambda \cdot \sigma(y^I) &\quad& \text{($\lambda$ is decreasing)} \\
  &\geq F(\q^{I_0}) - \lambda \cdot \q^{I_0} = \nu &\quad& \text{($\sigma(y)\in \Delta^{I_0}$).}
\end{alignat*}
We note for later use that if $F$ is strictly convex, then $\mu_I >0$ when $I\not\supset H$.  To see this, note that strict convexity implies $a_h>b$, and hence $\lambda\cdot y^I < \lambda \cdot \sigma(y^I)$.

We have met the premise of \cref{thm:critical}
and therefore shown that any preservative $Q$ is efficient.

\subsection{Removing the Smoothness Condition} \label{pf_nosmooth}
Now we turn the general case, where $F$ is convex and permutation-invariant but 
not necessarily smooth.
We note that any optimal $Q$ has the same total sum as $p$, hence 
we may restrict the domain of $F$ to the simplex 
$A = \{x \in \mathbb{R}_{\geq 0}^n \mid \sum_{i=1}^n x_i \leq \sum_{i=1}^n p_i\}$.

We first show that $F$ can be approximated by a smooth $\widetilde{F}$ 
on the domain $A$, where $\widetilde{F}$ is also convex and permutation-invariant. 
The main idea is to shrink the domain $A$ a little to $A'$, to give ``wiggle-room'', 
and then convolve $F$ with smooth density function $\theta$ with small support, 
yielding a smooth approximation $G$ to $F$ which is defined on $A'$.
The smooth approximation $G$ remains convex because it is a mixture of translates of $F$ which are all convex themselves. 
This portion of our argument is taken from \S 2 of \cite{ghomi}.
We then choose an affine contraction $R : A \to A'$ which only moves points a slight distance. 
We define $H(x) = G \circ R$, which is smooth and convex, and approximately equal to $F$.  However, $H$ is not permutation-invariant, so  
we define $\widetilde{F}(x) = {1 \over n!} \sum_{\sigma \in S_n} H(\sigma(x))$ to be the average of $H$ over all permutations.
The permutation-invariant $\widetilde{F}$ is convex, and is even closer to $F$ than was $H$, since $F$ is permutation-invariant.   

For $\delta > 0$ let $B_\delta(0) = \{y \in \mathbb{R}^n \mid \|y\|_2 < \delta\}$ be the ball of radius $\delta$ centered at $0$.
Let $A_\delta = \{x \in A \mid x + y \in A \text{ for all } y \in B_\delta(0)\}$.
Note that $A_\delta$ is convex.
Let $\delta_0$ be 
small enough so that $|F(x_1) - F(x_2)| < \epsilon/2$ whenever 
$x_1, x_2 \in A$ and $|x_1 - x_2| < \delta_0$.
Let $a \in A$ be the mean of $A$. 
Consider the affine contraction $R_t(x) = a + t(x-a)$ where $0 < t < 1$.
There exists some $t$ such that 
$\|R_{t}(x) - x\| < \delta_0$ for all $x \in A$.
In particular, we have $|F(R_t(x)) - F(x)| < \epsilon/2$ for all $x \in A$.
Furthermore, there exists $\delta_1 > 0$ such that 
the image of $R_t$ is contained in $A_{\delta_1}$.

Suppose that $A_\delta$ is nonempty (this is true if $\delta$ is sufficiently small).
Let $\theta$ be a smooth density function supported in $B_\delta(0)$.
For $x \in A_\delta$, define 
$G_\delta(x) = \int_{y \in B_\delta(0)} F(x-y) \theta(y) dy$.
Then $G_\delta$ is both convex and smooth.  Furthermore, 
$G_\delta(x) - F(x) = \left(\int_{y \in B_\delta(0)} F(x-y) \theta(y) dy \right) - F(x) = \int_{y \in B_\delta(0)} (F(x-y) - F(x)) \theta(y) dy$.
  In particular, we have that $|G_\delta(x) - F(x)| < \epsilon/2$ for all $x \in A_\delta$ 
for $\delta \leq \delta_0$.

Now take $\delta = \min(\delta_0, \delta_1)$. 
Suppose $x \in A$. We know that $R_t(x) \in A_{\delta_1} \subset A_\delta$.
Hence $R_t(x)$ lies in the domain of $G_\delta$.
We know that $|G_\delta(R_t(x)) - F(R_t(x))| < \epsilon/2$.
Furthermore, $|F(R_t(x)) - F(x)| < \epsilon/2$, and so 
$|G_\delta(R_t(x)) - F(x)| < \epsilon$.
Now $H := G_\delta \circ R_t$ is convex, smooth, and 
$|H(x) - F(x)| < \epsilon$ for all $x \in A$.
Finally we define $\widetilde{F}(x) = {1 \over n!} \sum_{\sigma \in S_n} H(\sigma(x)$.  Then $\widetilde{F}$ is convex, smooth, and permutation-invariant.  
Furthermore, $|\widetilde{F}(x) - F(x)| = {1 \over n!} \left|\sum_{\sigma \in S_n} H(\sigma(x)) - F(\sigma(x))\right| < \epsilon$. 

There is a sequence $\widetilde{F}_k : A \to \mathbb{R}$ such that 
each $\widetilde{F}_k$ is smooth, convex, and permutation-invariant, 
where $\widetilde{F}_k(x) \to F(x)$ as $k \to \infty$.   
 Suppose that $Q'$ is any unbiased sparsification. Let $Q$ be preservative; 
i.e. any $Q$ as defined by \ref{alg:usa}.  Then $Q$ is simultaneously optimal 
for all the $\widetilde{F}_k$.
The domain $A$ is compact and so $\E[\widetilde{F}_k(Q')] \to \E[F(Q')]$ 
and $\E[\widetilde{F}_k(Q)] \to \E[F(Q)]$ as $k \to \infty$.
Since $\E[\widetilde{F}_k(Q')] \geq \E[\widetilde{F}_k(Q)]$ for all $k$, 
we have $\E[F(Q')] \geq \E[F(Q)]$.  Hence $Q$ is optimal for $F$.

\subsection{Uniqueness under Strict Convexity} \label{sec:uniqueness}
Now assume that $F$ is strictly convex, and let $Q'$ be any efficient $m$-sparsification.  We will show that in fact $Q'$ is preservative.  Note that by Lemma \ref{lem:face-concentration}, we must have that $Q'$ is facet concentrated, so $Q'\sim \cdist(x_I', y^{I'})$. 

Let $Q \sim \cdist(x_I,y_I)$ be a preservative sparsification such that $x_I>0$ for all $I \supset H$.  For $0 \leq t \leq 1$, let
  $Q_t$ be the random variable corresponding to the convex mixture of distributions of $Q$ and $Q'$, where for any measurable set $A \subset \cup_I \Delta^I$ 
  we have
  $$\Pr(Q_t \in A) = (1-t) \Pr(Q \in A) + t \Pr(Q' \in A).$$
  Thus $Q_0 = Q$ and $Q_1 = Q'$. Note that $Q_t$ is still an unbiased sparsification, with 

  $$\E[D(Q_t)]= (1-t)\E[D(Q)] + t\E[D(Q')].$$
Hence $Q_t$ is efficient as well, and thus facet concentrated.  Then by the definition of $Q_t$, we must have that $y^{I'} = y^I$ whenever $I\supset H$ and $x_I'>0.$  To prove that $Q'$ is preservative, it thus remains to show that $x_I'=0$ for all $I\not\supset H$.  

Letting $x_I(t) = (1-t)x_I + tx_I'$, we have that $Q_t \sim\cdist(x_I(t),y^{I'})$ and that $f(x_I(t),y^{I'})$ is constant in time.  As $Q_0 = Q$, we have

\begin{equation}
    \begin{split}
    0 &= \frac{d}{dt}\bigg|_{t=0}f(x_I(t), y^{I'}) \\ 
    & =  \left(\nu \nabla S+ \sum_{i=1}^n \lambda_i \nabla G_i + \sum_{I: x_I = 0} \mu_I \nabla x_I\right) \cdot (x_I'-x_I,0) \\
    &=\sum_{I: x_I = 0} \mu_I  (x_I'-x_I) = \sum_{I: x_I = 0} \mu_I x_I'.
    \end{split}
\end{equation}
Thus as $\mu_I > 0$ for all $I\not \supset H$ by our earlier observation, it follows that $x_I' = 0$ for all such $I$.  Therefore $Q'$ is preservative.

\section{Additively Separable Divergences} \label{app:separable}
\subsection{ Coordinate Concentration}

\begin{lemma}[Coordinate concentration]
\label{lem:coord-concentration}

Assume that $\Div$ is strictly convex and additively separable.
Let $Q$ be an efficient unbiased $m$-sparsification of~$p \in \R^n_{>0}$.
Then for all~$i$, there exists $q_i > 0$
such that $\Pr(Q_i = q_i \mid Q_i \neq 0) = 1$.
That is,
any efficient unbiased $m$-sparsification of~$p$ is concentrated
on a unique nonzero value in each coordinate.
\end{lemma}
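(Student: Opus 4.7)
The plan is to replace the given efficient $Q$ with a competitor $\tilde Q$ having the same random support but deterministic nonzero values, and to show via Jensen's inequality combined with strict convexity that this replacement weakly decreases $\E[D(Q)]$, with strict decrease unless $Q$ is already coordinate-concentrated. Since $Q$ is assumed efficient, equality must hold, which forces the conclusion.

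First I would set $s_i \defeq \Pr(Q_i \neq 0)$ and let $T \defeq \supp(Q)$. Because $\E[Q_i] = p_i > 0$, we must have $s_i > 0$, and the conditional expectation $\E[Q_i \mid Q_i \neq 0] = p_i/s_i$ is well-defined and positive. Now define the competitor $\tilde Q_i \defeq (p_i/s_i)\,\mathbf{1}[i \in T]$. By construction $|\supp(\tilde Q)| = |T| \leq m$ almost surely, and $\E[\tilde Q_i] = (p_i/s_i)\Pr(i \in T) = p_i$, so $\tilde Q$ is a valid unbiased $m$-sparsification of $p$ and a legitimate competitor to $Q$.

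Next I would exploit additive separability together with strict convexity. Since the Hessian of $D$ is diagonal with entries $f_i''(q_i)$ and is positive definite by hypothesis, each $f_i$ is itself strictly convex. Conditioning on whether $Q_i$ is zero yields
\begin{equation*}
\E[D(Q)] = \sum_i \bigl[(1-s_i)f_i(0) + s_i \E[f_i(Q_i) \mid Q_i \neq 0]\bigr],
\end{equation*}
and Jensen's inequality applied to each conditional distribution gives $\E[f_i(Q_i) \mid Q_i \neq 0] \geq f_i(p_i/s_i)$, with equality if and only if $Q_i$ is almost surely constant on $\{Q_i \neq 0\}$. Summing over $i$ and recognizing the right-hand side as $\E[D(\tilde Q)]$, we obtain $\E[D(Q)] \geq \E[D(\tilde Q)]$, with equality if and only if $Q$ is already coordinate-concentrated.

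Efficiency of $Q$ therefore forces equality, and strict convexity then forces $Q_i = p_i/s_i$ almost surely on $\{Q_i \neq 0\}$ for every $i$, yielding the lemma with $q_i = p_i/s_i > 0$. I do not foresee a serious obstacle; the only subtlety is checking that $\tilde Q$ is a genuine unbiased $m$-sparsification, but this follows immediately from its construction as a deterministic relabeling of the values of $Q$ on the same random support $T$.
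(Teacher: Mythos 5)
Your proof is correct and follows essentially the same approach as the paper: condition on whether each $Q_i$ is zero, apply Jensen's inequality coordinatewise to the conditional distribution of $Q_i$ given $Q_i \neq 0$, and use efficiency plus strict convexity of the $f_i$ to force equality. The only (cosmetic) difference is that you replace all coordinates with their conditional means simultaneously via $\tilde Q$, whereas the paper replaces one coordinate at a time; both are valid and the resulting inequality chain is identical.
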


\begin{proof}
Consider any unbiased $m$-sparsification $Q$ of $p$.
For any~$i$,
if $\Pr(Q_i \ne 0) = 0$ then $p_i = \E[Q_i] = 0$,
whereas we are assuming that $p_i > 0$.
So it is legitimate to condition on $Q_i \ne 0$,
since this is an event with positive probability, and hence we can define
$q_i \defeq \E[Q_i \mid Q_i \ne 0]$. Note that $p_i = \E[Q_i] = q_i \Pr[Q_i \ne 0]$, so in particular $p_i > 0$ implies $q_i > 0$.

Fix any index~$i$, and let $Q'$ be the
random variable obtained as follows: first sample $\hat{q}\gets Q$; if
$\hat{q}_i = 0$ return $\hat{q}$; otherwise replace the $i$-th coordinate
of $\hat{q}$ with $q_i$ and return the result.
For simplicity of notation, in what follows, we assume without loss
of generality that $i=1$.

It is easy to check that $\E[Q'] = \E[Q]$
and $\abs{\supp(Q')} = \abs{\supp(\hat{q})} = m$, so $Q'$ is also an
unbiased $m$-sparsification of~$p$.
(But note that $\sum_j Q'_j$ is in general not equal to $\sum_j Q_j$, so even if $Q$ takes values in the probability simplex, $Q'$ usually does not).

By hypothesis, $\Div$ is additively separable.
Because $\Div$ is strictly convex,
the functions $f_i$ in \cref{eq:additivelyseparable}
are strictly convex.
By linearity of expectation,
\begin{align*}
& \E[D(Q') \mid Q_1 \neq 0] \\
  &= \E[D(q_1,Q_2, Q_3, \ldots ,Q_n) \mid Q_1 \neq 0] \\
  &= \E[f_1(q_1) + f_2(Q_2) + f_3(Q_3) + \cdots + f_n(Q_n)) \mid Q_1 \neq 0] \\
  &= \E[f_1(q_1) \mid Q_1 \ne 0] +\E[f_2(Q_2) \mid Q_1 \ne 0]  \\
  & \qquad {} + \cdots + \E[f_n(Q_n) \mid Q_1 \neq 0] \\
  &= f_1(\E[Q_1 \mid Q_1 \ne 0]) +\E[f_2(Q_2) \mid Q_1 \ne 0]  \\
  & \qquad {} + \cdots + \E[f_n(Q_n) \mid Q_1 \neq 0].
\end{align*}
Similarly,
\begin{align*}
\E[D(Q) \mid Q_1 \neq 0]
  = & \E[f_1(Q_1) \mid Q_1 \ne 0] \\
  & +\E[f_2(Q_2) \mid Q_1 \ne 0]  \\
  & + \cdots + \E[f_n(Q_n) \mid Q_1 \neq 0].
\end{align*}
Since $f_1$ is convex, Jensen's inequality~\cite[Theorem 4.2.1]{garling}
implies that
\begin{equation}
\label{eq:jensencoord}
  f_1(\E[Q_1 \mid Q_1 \ne 0]) \le \E[f_1(Q_1) \mid Q_1 \ne 0].
\end{equation}
Comparing the expressions for 
$\E[D(Q') \mid Q_1 \neq 0]$ and $\E[D(Q) \mid Q_1 \neq 0]$,
we see that \ineqref{eq:jensencoord} is equivalent to
\begin{equation}
\label{eq:jensencoord2}
\E[D(Q') \mid Q_1 \neq 0] \le \E[D(Q) \mid Q_1 \neq 0].
\end{equation}
Now
\begin{align*}
\E[D(Q')] =& \Pr(Q'_1 = 0) \E[D(Q') \mid Q'_1 = 0] \\
         & + \Pr(Q'_1 \ne 0) \E[D(Q') \mid Q'_1 \ne 0] \\
          =& \Pr(Q_1 = 0) \E[D(Q) \mid Q_1 = 0] \\
         & + \Pr(Q_1 \ne 0) \E[D(Q') \mid Q_1 \ne 0],
\end{align*}
while
\begin{align*}
\E[D(Q)] = & \Pr(Q_1 = 0) \E[D(Q) \mid Q_1 = 0] \\
          & + \Pr(Q_1 \ne 0) \E[D(Q) \mid Q_1 \ne 0].
\end{align*}
Therefore, \ineqref{eq:jensencoord2} implies that
\begin{equation}
\label{eq:coord}
 \E[D(Q')] \le \E[D(Q)].
\end{equation}
But $Q$ is efficient, so \ineqref{eq:coord}
must in fact be an equality.
This forces \ineqref{eq:jensencoord2} to be an equality,
which in turn forces
\ineqref{eq:jensencoord} to be an equality.
But since $f_1$ is \emph{strictly} convex,
Jensen's inequality~\cite[Theorem 4.2.1]{garling} is an equality
only if the random variable $(Q_1 \mid Q_1 \ne 0)$ is concentrated
on its mean value.

\end{proof}

Coordinate concentration allows us to further reduce the task of
finding efficient unbiased $m$-sparsifications to an $n$-dimensional problem, as follows.
Let $q_i$ be as in the statement of \cref{lem:coord-concentration},
and define $s_i \defeq \Pr(Q_i \ne 0)$.
Then the quantity we seek to minimize is
\begin{align*}
\E[D(Q)] =&
\sum_{i=1}^n \bigl(\Pr(Q_i = 0)\cdot f_i(0) \\
           & \quad \quad +  \Pr(Q_i \ne 0) \cdot \E[f_i(Q_i) \mid Q_i \ne 0]
    \bigr) \\
   =& \sum_{i=1}^n \bigl( (1-s_i) f_i(0) + s_i f_i(q_i) \bigr).
\end{align*}
It follows directly from the definitions of $s_i$ and~$q_i$
that $s_i q_i = \E[Q_i]$; on the other hand, unbiasedness means
that $\E[Q_i] = p_i$.  In the proof of
\cref{lem:coord-concentration}, we noted that $s_i > 0$,
so we may replace $q_i$ with $p_i/s_i$.  That is, we seek to
minimize
\begin{equation*}
   \sum_{i=1}^n \bigl( (1-s_i) f_i(0) + s_i f_i(p_i/s_i) \bigr).
\end{equation*}
If we replace each function $f_i(x)$ with a constant shift
$f_i(x) - c_i$ (where $c_i$ can depend on~$p$ but not on~$x$),
then the value of the divergence just changes by a constant,
which does not affect the optimization problem we are trying
to solve.  So by setting $c_i \defeq f_i(0)$, we may assume
without loss of generality that $f_i(0) = 0$ for all~$i$.
Hence we are reduced to finding unbiased $m$-sparsifications~$Q$
that minimize
\begin{equation*}
\sum_{i=1}^n s_i f_i(p_i/s_i).
\end{equation*}
Now, the sum of the $s_i$ is the expected number of nonzero
entries of~$Q$, which by definition is at most~$m$.
Therefore, we are led to consider the following optimization
problem.

\framed{\clabel{opt:ip}{IPO}%
   \textbf{Problem.} \emph{Inclusion Probability Optimization (\ref*{opt:ip}).}\\
   For strictly convex $f_i$ with $f_i(0) = 0$,
  \begin{gather*}
  \text{minimize}\quad \sum_{i=1}^n s_i f_i\left(\frac{p_i}{s_i}\right) \quad \text{subject to}\\
  \begin{alignedat}{2}
    &\sum_{i=1}^n s_i \le m\\
    \text{and} \qquad  &0 < s_i \leq 1 &\quad&\text{ for all $i$.}
  \end{alignedat}
  \end{gather*}
}

 In Appendix \ref{sec:optimization}, we solve \ref{opt:ip}.
We show in particular that the optimal solution satisfies
$\sum_i s_i = m$.  We claim that we thereby characterize all
efficient unbiased $m$-sparsifications.
Why?  Well, we have just argued
that given any efficient unbiased
$m$-sparsification~$Q$, the quantities
$s_i \defeq \Pr(Q_i \ne 0)$ must yield an optimal solution to \ref{opt:ip}.
Conversely, given any optimal solution $\{s_i\}$ to \ref{opt:ip},
we saw in \autoref{sec:marginals} that
the conditions $\sum_i s_i = m$ and $0 < s_i \le 1$
imply that it is possible to sample with the specified
marginals $\{s_i\}$.
Each way of sampling with the specified marginals
completely specifies
a unique unbiased $m$-sparsification of~$p$,
which is guaranteed to be efficient since the $\{s_i\}$
constitute an optimal solution to \ref{opt:ip}.

\subsection{ Solving \ref*{opt:ip}}

\label{sec:optimization}
 As usual, we regard $p\in \R_{>0}^n$ as fixed. Let
\begin{equation*}
F(s) \defeq \sum_{i=1}^n s_i f_i\biggl(\frac{p_i}{s_i}\biggr)
\end{equation*}
be the function we are seeking to minimize.
We show that $F$ is strictly convex\footnote{In fact,
$F$ is an \emph{$f$-divergence}~\cite{amari},
and it is a standard fact that the (strict) convexity of~$f$ implies
the (strict) convexity of~$F$, but we give a proof anyway since it is short.}.
By direct computation,
\begin{equation}
\frac{\partial F(s)}{\partial s_i} = f_i\left(\frac{p_i}{s_i}\right)
   - \left(\frac{p_i}{s_i}\right)f'_i\left(\frac{p_i}{s_i}\right). \label{eqn:g}
\end{equation}
Then
\begin{align*}
\frac{\partial^2 F(s)}{\partial s_i^2} &=
   -\frac{p_i}{s_i^2}f'_i\left(\frac{p_i}{s_i}\right)
   +\frac{p_i}{s_i^2}f'_i\left(\frac{p_i}{s_i}\right)
   + \left(\frac{p_i^2}{s_i^2}\right)f''_i\left(\frac{p_i}{s_i}\right) \\
   &= \left(\frac{p_i^2}{s_i^2}\right)f''_i\left(\frac{p_i}{s_i}\right) > 0,
\end{align*}
where the final inequality follows because $f_i$ is strictly convex
and $s_i > 0$.
So the Hessian is a diagonal matrix with strictly positive
entries on the diagonal, and $F$ is (strictly) convex.

 Motivated by \autoref{eqn:g}, define $g_i(x) \defeq x f_i'(x) - f_i(x)$, so that $\pd[F]{s_i}(s) = -g_i(p_i/s_i)$.
A similar calculation to the one above shows that $g_i$ is a strictly increasing function of $x > 0$. 
Moreover, $g_i(0) = 0$ because $f_i(0) = 0$,
so $g_i(x) > 0$ for all $x>0$.

The constraint that $s_i > 0$ is slightly awkward to deal with.
Our approach is to pick some small $\epsilon > 0$ and replace
the constraint $s_i > 0$ with $s_i \ge \epsilon$.
We then show that for all sufficiently small~$\epsilon$,
all optimal solutions are independent of~$\epsilon$
and do not lie on $s_i = \epsilon$.
Any feasible solution with $s_i > 0$ will be feasible
for some $\epsilon > 0$, and hence its objective value
cannot exceed the optimal value.

Following the standard recipe for convex
optimization~\cite[Chapter 5]{boyd-vandenberghe},
we define the Lagrangian
\begin{align*}
\mathcal{L}(s, \mu,\nu, \lambda) = & F(s) + \sum_{i=1}^n \mu_i(s_i - 1)
  + \sum_{i=1}^n \nu_i (\epsilon - s_i) \\
  & + \lambda \biggl(-m + \sum_{i=1}^n s_i\biggr),
\end{align*}
where $\mu_i$, $\nu_i$, and~$\lambda$ are Lagrange multipliers.
Let $\vec{e}_i$ denote the $i$-th unit vector,
and let $\ones \defeq \sum_i \vec{e}_i$.  Then
\begin{equation*}
\nabla \mathcal{L} = 
(\nabla F)(s) + \sum_{i=1}^n \mu_i \vec{e}_i
  - \sum_{i=1}^n \nu_i \vec{e}_i + \lambda \ones.
\end{equation*}
At an optimal point, the KKT conditions are
(in addition to the condition that an optimal point be feasible)
\begin{align*}
 -g_i\biggl(\frac{p_i}{s_i}\biggr) + \mu_i - \nu_i + \lambda &= 0 \\
  \mu_i, \nu_i, \lambda &\ge 0\\
    \mu_i(s_i - 1)  & = 0\\
 \nu_i(\epsilon - s_i)  & = 0\\
  \lambda \biggl(-m + \sum_{i=1}^n s_i\biggr) &= 0
\end{align*}

Given a proposed solution~$s$, let $H \defeq \{i \colon s_i = 1\}$
(the \emph{heavy} indices), let $E \defeq \{i \colon s_i = \epsilon\}$
(the \emph{epsilon} indices), and let $L$ denote the remaining
(\emph{light}) indices.  We first claim that $L \ne \varnothing$.
Suppose to the contrary that $L = \varnothing$.
Now $m < n$ and $\sum_i s_i \le m$, so $H$ cannot comprise
\emph{all} the indices, and the remaining indices must be in~$E$.
But for all sufficiently small positive~$\epsilon$,
$\sum_i s_i$ cannot be exactly equal to an integer~$m$,
so $\lambda$ is forced to be zero.  For any $i\in E$,
we must have $\mu_i = 0$, so $g_i(p_i/\epsilon) + \nu_i = 0$,
which implies that
\begin{equation}
\label{eq:gpi}
 g\biggl(\frac{p_i}{\epsilon}\biggr)  = -\nu_i  \le 0.
\end{equation}
But as we observed earlier, $g_i(p_i/\epsilon) > 0$ since
$p_i > 0$. This contradiction shows that $L \ne \varnothing$.

For $i\in L$, $\mu_i = \nu_i = 0$, so
\begin{equation*}
-g_i\biggl(\frac{p_i}{s_i}\biggr) + \lambda = 0.
\end{equation*}
Two inferences are immediate.
First, $\lambda = g_i(p_i/s_i) > 0$, and hence
\begin{equation*}
\sum_{i=1}^n s_i = m.
\end{equation*}
It follows that there cannot be any $\epsilon$
contribution to $\sum_i s_i$, so $E = \varnothing$.
Second, $g_i(p_i/s_i) = \lambda$ is constant across all $i\in L$.
By definition of~$L$, we must have $1 > s_i = p_i/g_i^{-1}(\lambda)$ for all $i\in L$.

 Conversely, for $i\in H$ we must have
\begin{equation*}
  g_i(p_i/s_i) - \lambda = \mu_i \geq 0
\end{equation*}
so $1 = s_i \leq p_i / g_i^{-1}(\lambda)$ (since $g_i$ is increasing). That is, for all $i$, we have $s_i = \min(1, p_i/g_i^{-1}(\lambda))$.

 The final constraint which we must satisfy (since $\lambda > 0$) is
\begin{equation}
  m = \sum_i s_i = \sum_i \min(1, p_i/g_i^{-1}(\lambda)). \label{eqn:lambda}
\end{equation}
The right hand side is a continuous, decreasing function of $\lambda$ with range $(0, n]$. Furthermore, it is strictly decreasing except where it is equal to $n > m$. Therefore there is a unique $\lambda > 0$ solving \autoref{eqn:lambda} (which can easily be found by, say, binary search), which yields our desired optimum.

\end{document}